\newcommand{\todo}[1]{{\color{red} TODO: #1}}
\DeclareMathOperator*{\argmax}{arg\,max}
\DeclareMathOperator*{\NSW}{NSW}
\title[AAMAS-2023 Formatting Instructions]{Welfare and Fairness in Multi-objective Reinforcement Learning}
\author{Zimeng Fan$^*$}
\affiliation{
  \institution{Duke University}
  \city{Durham}
  \state{NC}
  \country{USA}}
\email{zf59@duke.edu}
\author{Nianli Peng$^*$}
\affiliation{
  \institution{Duke University}
  \city{Durham}
  \state{NC}
  \country{USA}}
\email{nianli.peng@duke.edu}
\author{Muhang Tian$^*$}
\affiliation{
  \institution{Duke University}
  \city{Durham}
  \state{NC}
  \country{USA}}
\email{muhang.tian@duke.edu}
\author{Brandon Fain}
\affiliation{
  \institution{Duke University}
  \city{Durham}
  \state{NC}
  \country{USA}}
\email{brandon.fain@duke.edu}
\begin{abstract}
We study fair multi-objective reinforcement learning in which an agent must learn a policy that simultaneously achieves high reward on multiple dimensions of a vector-valued reward. Motivated by the fair resource allocation literature, we model this as an expected welfare maximization problem, for some nonlinear fair welfare function of the vector of long-term cumulative rewards. One canonical example of such a function is the Nash Social Welfare, or geometric mean, the log transform of which is also known as the Proportional Fairness objective. We show that even approximately optimal optimization of the expected Nash Social Welfare is computationally intractable even in the tabular case. Nevertheless, we provide a novel adaptation of Q-learning that combines nonlinear scalarized learning updates and non-stationary action selection to learn effective policies for optimizing nonlinear welfare functions. We show that our algorithm is provably convergent, and we demonstrate experimentally that our approach outperforms techniques based on linear scalarization, mixtures of optimal linear scalarizations, or stationary action selection for the Nash Social Welfare Objective.
\end{abstract}
\keywords{Multi-objective Reinforcement Learning; Algorithmic Fairness}
\newcommand{\BibTeX}{\rm B\kern-.05em{\sc i\kern-.025em b}\kern-.08em\TeX}
\begin{document}

%%% The following commands remove the headers in your paper. For final 
%%% papers, these will be inserted during the pagination process.

\pagestyle{fancy}
\fancyhead{}

%%% The next command prints the information defined in the preamble.

\maketitle 
\def\thefootnote{*}\footnotetext{These authors contributed equally to this work, ordered alphabetically by last name.}\def\thefootnote{\arabic{footnote}}
%%%%%%%%%%%%%%%%%%%%%%%%%%%%%%%%%%%%%%%%%%%%%%%%%%%%%%%%%%%%%%%%%%%%%%%%

\section{Introduction}
\label{sec:intro}
Suppose a logistics company announces they will deploy a reinforcement learning agent that optimizes for completed deliveries. It works and the number of deliveries increase. Some days later, the company begins receiving complaints that delivery service to some locations has actually gotten much worse than before the AI deployment. The company assures its customers that the AI is learning and things will improve. But in another week, the situation is the same. Desperate, the company's engineers boost the reward weights associated with deliveries to those locations, only to find that now other locations are being neglected.

This is an example where data-driven algorithmic systems may be generally quite performant but nonetheless fail on structured subsets of input. This is well-known in reinforcement learning, where extensive ``reward shaping'' is sometimes necessary to achieve desired behavior. In the opening example, the desired behavior is a policy that achieves high delivery service rates at \textit{all} customer locations. However, standard reinforcement learning, in which the reward signal is a scalar value and the goal is to maximize total discounted reward, might naturally learn a policy that prioritizes ``easy'' to optimize regions (perhaps clusters of many tightly packed locations with many deliveries) at the expense of more difficult ways to achieve reward. Furthermore, because standard techniques rely on learning a \textit{stationary} policy, the policy continues to prioritize the same customers day after day. Addressing this problem may require problem-specific fine tuning of rewards, and even then ``fixing'' the original problem can introduce new ones.

In this paper, we take a different approach. We study nonlinear welfare optimization in the context of Multi-objective Reinforcement Learning (MORL). A Multi-Objective Markov Decision Process (MOMDP) is a Markov Decision Process where rewards are vectors instead of scalars. The components of this vector can be viewed either as different criteria like cost and time or, as as we interpret them, as individual utilities of \textit{``users''} to whom the learning agent should be \textit{fair}. In the opening example, the customers are the users and the vector reward tracks how well the learning agent is doing at optimizing deliveries for each user separately. 

The solution to a MOMDP is a policy that seeks to maximize some function of the cumulative reward vector. Due to the linearity of expectation, linear functions maximizing some weighted arithmetic mean of the cumulative reward vector are the simplest to use. However, for any particular selection of weights, the resulting policies may be undesirable from a fairness perspective as any linear function may ignore the utility of some users. For instance, for equal weights, a policy that gives user 1 a utility of 10 and user 2 a utility of 0 is preferred over another policy that yields a utility of 4 for both. We therefore study a more general class of \textit{welfare} functions with a particular emphasis on \textit{nonlinear} welfare functions that optimize for fairness and efficiency.

Optimizing a nonlinear welfare function in a MOMDP is a substantial algorithmic challenge. The Bellman optimality principles \cite{bellman1966dynamic}\cite{sutton1998introduction} no longer hold, and stationary policies, in which the action selection depends only on the current state and not history, are no longer necessarily optimal. This is quite intuitive for fairness. For example, if an AI personal assistant is tasked with grocery shopping for a household with competing preferences over desserts, it is relevant to the current decision whether one member of the household got their most or least favorite dessert in previous weeks, as the agent may wish to be fair to its users across time.

Though these examples are toys, there are many real-world decision problems in which a learning agent may need to simultaneously prioritize more than a single utility or goal in a balanced and fair way. In telecommunications and wireless networking, one may want to allocate bandwidth in a way that balances the quality of service in many different locations. In autonomous driving, one may want to balance vehicle speed and passenger comfort \cite{autonomous}.

\subsection{Contributions and Outline}
Sections~\ref{sec:related} and~\ref{sec:preliminaries} introduce related work and preliminaries for MOMDPs. Our results are as follows:
\begin{enumerate}[leftmargin=5mm, rightmargin=5mm, itemsep=1mm]
\item In Section~\ref{sec:desiderata} we introduce and characterize the problem of optimizing expected welfare for fairness in a MOMDP. We specifically focus on nonlinear welfare functions, with the Nash Social Welfare (NSW) as our canonical example of a fair welfare function.
\item In Section~\ref{sec:algorithm} we give a reduction in Theorem~\ref{theorem:hardness} to show that optimizing expected NSW is computationally intractable, even in the tabular setting. We further show that stationary policies cannot, in general, guarantee high approximation to optimality as the number of dimensions of reward grows.
\item On the positive side, also in Section~\ref{sec:algorithm} we define Algorithm~\ref{alg:cap} \textit{Welfare Q-Learning} that adapts model-free Q-learning in two important ways to optimize nonlinear welfare functions: (1) nonlinear learning updates, and (2) non-stationary action selection. We show in Theorem~\ref{theorem:convergence} that our algorithm is provably convergent.
\item In Section~\ref{sec:experiments} we deploy our algorithm in two simulated environments to optimize expected NSW. Our algorithm substantially outperforms the following baselines: (1) optimal [for NSW] linear scalarization, (2) optimal [for NSW] mixtures of optimal policies in each dimension, and (3) stationary action selection on our algorithm's learned Q-table.
\end{enumerate}

%We provide experimental results in two tabular simulated environments compared with two other baseline algorithms, which empirically demonstrates that (a) our algorithm is effective in finding policies
%with high expected welfare.
%(b) Non-stationary policy performs at least as good as stationary policy on the fairness objective.
%(c) The speed of convergence depends on the dimensionality of the reward space of the environment.
%(d) The performance of mixture policy and linear scalarization on fairness objective is dependent on the environment. 

%%%%%%%%%%%%%%%%%%%%%%%%%%%%%%%%%%%%%%%%%%%%%%%%%%%%%%%%%%%%%%%%%%%%%%%%

\section{Related Work}
\label{sec:related}
Multi-objective reinforcement learning (MORL) algorithms include \textit{single-policy} and \textit{multi-policy} methods \cite{liu2014multiobjective}. Single-policy methods use a scalarization function to reduce the problem to scalar optimization for a single policy. The simplest form is linear scalarization, applying a weighted sum on the Q vector \cite{moffaert2013hypervolume}. 

\textit{Multi-policy} methods search for a set of policies that approximate the Pareto frontier of the problem. For instance, the \textit{convex hull value-iteration algorithm} \cite{barrett2008learning} computes the deterministic stationary policies on the convex hull of the Pareto front. \textit{Pareto Q-learning} \cite{moffaert2013hypervolume} integrates temporal difference algorithms with Pareto dominance relations to learn a set of Pareto dominating policies. \textit{Stochastic mixture policy} \cite{vamplew2009constructing} combines multiple deterministic base policies with a convex combination, choosing a base policy with a given probability at the start of each episode. We focus on single-policy methods with nonlinear scalarization, as the size of the Pareto frontier may grow exponentially with the dimensionality of the problem, and because the Pareto frontier may not be well-approximated by its convex hull for nonlinear welfare functions.

Fairness in Reinforcement Learning has been recently considered, beginning with \cite{jabbari2017fairness} in a scalar setting. More directly related to our work, \cite{ICML20} investigated the (Deep) MORL problem of learning a fair policy to optimize the \textit{Generalized Gini Social Welfare function} using nonlinear scalarization. \cite{AAMAS22} studied a similar problem and considered maximizing concave welfare functions generally and Nash welfare specifically, showing an optimal approach to optimizing the welfare of expected rewards in the tabular setting and an extension to the function approximation setting. 

Our work differs from these in two major ways: (1) we seek to optimize the expected welfare, rather than welfare of expected rewards (see Section~\ref{sec:desiderata}) which is fundamentally more challenging computationally (see Section~\ref{sec:algorithm}), and (2) we learn a non-stationary policy, as stationary policies may be far from optimal for optimizing expected (nonlinear) welfare (see Section~\ref{sec:algorithm} and \cite{ogryczak2014fair}).

We formulate our consideration of welfare functions in Section~\ref{sec:desiderata} based on consideration from the resource allocation literature \cite{moulin2004fair}. Our canonical example of a fair welfare function, the Nash Social Welfare (NSW) derives from Nash's solution to the bargaining game \cite{nash1950bargaining} and its n-player extension \cite{luce1989games}. Its log transform is commonly known as the proportional fairness objective. More recent studies have shown NSW maximization provides outstanding fairness guarantees when allocating both divisible and indivisible goods \cite{caragiannis2019unreasonable}.

%Our work differs from (fair) MORL methods in current literature in the following ways. We believe that \textit{multi-policy} methods can be computationally expensive when Pareto front is large, therefore not a practical option to be adopted. Among the \textit{single-policy} methods, we consider linear scalarization as insufficient for fairness considerations, since it has been proven to fail to approximate Pareto fronts with concave regions. Therefore, we hold the similar view with previous fair MORL studies that a nonlinear (welfare) scalarization function should be implemented. However, our problem formulation differs from that of \cite{ICML20}. We focus on maximizing expected welfare rather than welfare of expectations, the latter's optimal solution can be found in polynomial time in tabular setting but such solution can be problematic in certain scenarios, which is demonstrated in desiderata section. We also propose the need for non-stationary policy for fair welfare optimization task, and empirically demonstrate its effectiveness.

%\todo{cite every multi-objective RL paper that is related with our work, double check}\\
%\todo{cite that the Siddique paper also observed the state dependency situation}

%%%%%%%%%%%%%%%%%%%%%%%%%%%%%%%%%%%%%%%%%%%%%%%%%%%%%%%%%%%%%%%%%%%%%%%%

\section{Preliminaries}
\label{sec:preliminaries}
\noindent \textbf{Multi-objective Markov Decision Process.} A Multi-objective Markov Decision Process (MOMDP) consists of a finite set $\mathcal{S}$ of states, a starting state $s_1\in\mathcal{S}$,\footnote{In general we may have a distribution over starting states; we assume a single starting state for ease of exposition.} a finite set $\mathcal{A}$ of actions (we let $\mathcal{A}(s)$ denote the subset of actions available in state $s$), and probabilities $\mathcal{P}_{a, s, s'} \in [0, 1]$ that determine the probability of transitioning to state $s'$ from state $s$ after taking action $a$. Probabilities are normalized so that $\sum_{a \in \mathcal{A}(s), s'} \mathcal{P}_{a, s, s'} = 1$ for all $s$. We also have a reward function $\bm{R}(s,a): \mathcal{S} \times \mathcal{A} \rightarrow \mathbb{R}^n$  for taking action $a$ in state $s$.\footnote{For simplicity of exposition, we assume rewards are deterministic.} Some states may be \textit{terminal}, meaning they transition only to themselves and yield $\bm{0}$ reward. 

Each of the $n$ dimensions of the reward vector correspond to one of the multiple objectives that are to be maximized. At each time step $t$, the agent observes state $s_t \in \mathcal{S}$, takes action $a_t \in \mathcal{A}(s_t)$, and receives a reward vector $\bm{r}_{t} = \bm{R}(s_t, a_t) \in \mathbb{R}^n$. The environment, in turn, transitions into $s_{t+1}$ with probability $\mathcal{P}_{a_t,s_t,s_{t+1}}$. Where clear from context, we will often omit the subscript and simply write the immediate reward vector as $\bm{r}$.

A \textit{trajectory} is a sequence of state, action, reward tuples $\tau = (s_1, a_1, \bm{r}_1), (s_2, a_2, \bm{r}_2),...,(s_T, a_T, \bm{r}_{T})$. A trajectory that begins in the starting state  $s_1$ and ends in a terminal state defines an \textit{episode}. For a discounting factor $\gamma \in [0, 1)$, the \textit{discounted cumulative return} of a trajectory is the vector
$$\bm{G}(\tau) = \sum_{t=1}^{\infty} \gamma^{t-1} \bm{r}_t.$$

A \textit{stationary policy} is function $\pi(a \mid s): \mathcal{S} \times \mathcal{A} \rightarrow [0, 1]$ that forms a probability distribution such that $\sum_{a \in \mathcal{A}(s)} \pi(a \mid s) = 1$ for all $s$. Such a policy is \textit{stationary} since the probability with which an action is selected depends only on the current state. More generally, a policy (not necessarily stationary) is a function $\pi(a \mid \tau, s)$ that may additionally depend on a given trajectory (intuitively, the history prior to reaching state $s$). %The expected discounted cumulative return of a policy $\pi$ is $\mathbb{E}_{\tau \sim \pi} \left[ \bm{G}(\tau) \right]$.

An \textit{action value function} is defined as the expected total reward starting from $s$, taking action $a$, and following policy $\pi$ thereafter:
$$\bm{q}_\pi(s,a):=\mathbb{E}_{\tau \sim \pi}\left[\,\sum_{k=0}^{\infty} \gamma^k \bm{r}_{t+k}\Bigm\vert s_t=s, a_t=a\right]\,.$$

The value function of a policy $\pi$ from state $s$ is defined by:
$$\bm{v}_\pi (s) := \mathbb{E}_{\tau \sim \pi}\left[\sum_{k=0}^{\infty} \gamma^{k} \bm{r}_{t+k} \Bigm\vert s_t=s\right].$$

Our algorithms will aim to solve the learning problem by finding an estimate of $\bm{q}_\pi(s,a)$ and $\bm{v}_\pi (s)$. We denote such estimates as $\bm{Q}_\pi(s,a)$ and $\bm{V}_\pi(s)$, respectively.

%Throughout, we denote vectors with lower case letters in bold, unless otherwise specified. Any general function is denoted as $f$, and any general welfare function is denoted as $W$.

%%%%%%%%%%%%%%%%%%%%%%%%%%%%%%%%%%%%%%%%%%%%%%%%%%%%%%%%%%%%%%%%%%%%%%%%

\section{Problem Formulation}
\label{sec:desiderata}
Our goal is to learn a policy $\pi$ that maximizes $\mathbb{E}_{\tau \sim \pi} \left[ \bm{G}(\tau) \right]$ in all dimensions. To make this optimization objective concrete, we must specify a \textit{scalarization} function $f: \mathbb{R}^n \rightarrow \mathbb{R}$. In fair reinforcement learning, we think of each of the $n$ dimensions of the reward vector $\bm{r}$ as corresponding to a distinct user to whom the learning agent wishes to be fair. The scalarization function can thus be thought of as a \textit{welfare} function $W$ over the users, and the learning agent is a welfare maximizer. For a given welfare function $W$, our goal is then to compute a policy $\pi^*$ that maximizes expected welfare:
\begin{equation} 
\pi^* = \argmax_{\pi} \mathbb{E}_{\tau \sim \pi} \bigl[W \bigl(\bm{G}(\tau) \bigr) \bigr].
\label{eq:expected_welfare}
\end{equation}

\subsection{Welfare Axioms} Here we describe desirable properties of welfare functions in terms of general outcomes as vectors $\bm{v}$. You can think of these vectors as possible discounted cumulative reward vectors corresponding to different policies. In the fair division literature \cite{moulin2004fair}, the most basic requirement of a welfare function is \textit{monotonicity}.%In the context of MOMDP, we define it as follows:

\begin{definition} $W$ satisfies \textbf{monotonicity} if and only if for all $\bm{v}$ and $\bm{v'}$ with $v_j \geq v'_j$ for all $j$ and $v_i > v_i'$ for some $i$, $W(\bm{v}) > W(\bm{v'})$.
\end{definition}

Intuitively, \textit{monotonicity} specifies that given all else equal, one prefers to increase a user's utility. 
%Monotonicity is related to \textit{pareto optimality}. For a function $f$ that satisfies \textit{monotonicity}, if $\bm{v}$ pareto dominates $\bm{v'}$, $f(\bm{v}) > f(\bm{v'})$. Pareto dominance is related with 
Related is \textit{Pareto optimality}, that an outcome should be efficient in the sense of not being dominated by any other outcome. 
\begin{definition}
An outcome $\bm{v}$ satisfies \textbf{Pareto optimality} if there is no other outcome $\bm{v'}$ such that $v'_i \geq v_i$ for all users $i$, and at least one inequality is strict. $W$ satisfies Pareto optimality if any $\bm{v}$ (within some feasible space) that maximizes $W$ is Pareto optimal.
\end{definition}

Welfare functions should also satisfy \textit{symmetry}, or indifference towards permutations of the input \cite{moulin2004fair}. This is the most basic form of a fairness guarantee, that different users are treated similarly.

\begin{definition}
$W$ satisfies \textbf{Symmetry} if for every permutation $\sigma$ of its inputs $\bm{v}$, $W(\sigma(\bm{v}))=W(\bm{v})$.
\end{definition}

A family of welfare functions satisfying the above properties are \textit{generalized mean $p$-welfare functions} \cite{barman22, moulin2004fair}, where $W_p(\bm{v}) = 1/n \left(\sum_{i=1}^{n} \left( v_i \right)^{p} \right)^{1/p}$. For instance, when $p = 1$, we have the utilitarian welfare function \cite{binmore1998egalitarianism}, the arithmetic mean of utilities.

%defined as the $P$-norm of a vector of (non-negative) utilities \cite{moulin2004fair}. For instance, when $P=1$, we have the utilitarian welfare function \cite{binmore1998egalitarianism}, proportional to sum of utilities. 

Note that the utilitarian social welfare that may not be suitable for ensuring fairness on outcomes. Monotonicity, Pareto optimality, and Symmetry are merely minimal requirements for a welfare function. One may wish to introduce a stronger axiom such as the \textit{Pigou-Dalton Principle} \cite{castagnoli1990note}. This principle states that one-to-one transfer of utility (or rewards in the MOMDP) from a better-off user to a worse-off user should increase the overall welfare. 

\begin{definition}
$W$ satisfies the \textbf{Pigou-Dalton Principle} if for all $\bm{v}, \bm{v'}$ equal except for $v_i = v_i' + \delta$ and $v_j = v_j' - \delta$ where $v_j' - v_i' > \delta > 0 $, $W(\bm{v}) > W(\bm{v'})$.
\end{definition}

In other words, more equal distributions of utility are preferred. Functions that satisfy this formulation of fairness are often concave, capturing the diminishing marginal returns of increasing the utility of a user who already enjoys high utility relative to other users. Among the generalized mean $p$-welfare functions, the Pigou-Dalton Principle is satisfied for all $p < 1$. Our algorithm is designed to maximize welfare functions in this class. 

%. It aims to maximize the utility of the worst-off user. Egalitarian rule is in distinct contrast to utilitarian rule \cite{binmore1998egalitarianism}, with the former entirely focused on fairness and thus is efficiency-blind, and the latter being fairness-blind.

\subsection{Nash Social Welfare Function} The extreme case of a fair welfare function is the generalized mean $p$-welfare function where $p \rightarrow -\infty$, which corresponds to the \textit{egalitarian} welfare function \cite{sen2018collective} that maximizes the minimum utility (and subject to that, optimizes the next smallest, and so forth). In-between the extremes of the utilitarian and egalitarian social welfare functions, we specifically focus on the \textbf{Nash Social Welfare} (NSW) function as our canonical example of a fair welfare function that also balances efficiency with fairness \cite{nash1950bargaining, kaneko1979nash, fain2018fair, caragiannis2019unreasonable}.
%The \textbf{Nash Social Welfare} function ($\NSW$) is defined as follows:
\begin{equation}
    \NSW(\bm{v}) = \left(\prod_{i=1}^{n} v_i \right)^\frac{1}{n}
\end{equation}

NSW is also simply the geometric mean of utilities, and in its log transform is also known as the proportional fairness objective. Note that NSW is a generalized mean $p$-welfare function where $p \rightarrow 0$. In addition to the previous desirable properties, NSW also enjoys the property of being \textit{scale invariant}, meaning that the $\argmax$ of NSW is invariant under scaling of a given dimension of reward. From a practical perspective, this means that the relative scales of utility or reward for each dimension are not significant and do not need to be tuned during reward shaping.

Though we focus on NSW as our canonical example, we note that other reasonable welfare functions exist. For example, in Multi-Objective optimization, several works have studied Ordered Weighted Average (OWA) operators, a family of operators that contains many types of means \cite{YAGER1993125}.

%In addition to the core properties discussed above, $\NSW$ enjoys \textit{proportionality} as its fairness property. \textit{Proportionality} creates a natural balance between utilitarian and egalitarian approaches to welfare and fairness, and it ensures a per-user fair share guarantee. Optimizing a welfare function that satisfies \textit{proportionality} allows the result to be fair without considerable cost on efficiency and total utility. We believe this property is essential to RL setting since it considers much of the total utility, discouraging the agent from achieving fairness at the cost of the majority's welfare.

%\begin{definition}
%\textit{An outcome $\bm{v}$ satisfies proportionality if $v_i \geq M_i/n$ for all users $i$, where $M_i$ is the maximum possible utility user $i$ can derive from any outcome.}
%\end{definition}

%Another property of $\NSW$ is \textit{Scale Invariance}, which ensures the function is invariant to scaling the utilities of any individual user. Under this property, scaling the valuation function of particular user(s) would not alter the outcome. In context of RL, this means that the final outcome would not differ even if some users' rewards scale up. This property allows $\NSW$ to be less sensitive to scales of rewards.

\subsection{Expected Welfare} In contrast to some prior work \cite{ICML20, AAMAS22} we focus on optimizing the expected welfare $\mathbb{E}_{\tau \sim \pi} \bigl[W \bigl(\bm{G}(\tau) \bigr) \bigr]$ rather than the welfare of the expectation $W\bigl(\mathbb{E}_{\tau \sim \pi} [ \bm{G}(\tau) ]\bigr)$. Note that for any concave welfare function, including $\NSW$, \textit{Jensen's inequality} \cite{durrett2019probability} implies 

\begin{equation}
\mathbb{E}_{\tau \sim \pi} \bigl[W \bigl(\bm{G}(\tau) \bigr) \bigr] \leq W\bigl(\mathbb{E}_{\tau \sim \pi} [ \bm{G}(\tau) ]\bigr).
\end{equation}

%We now present another idea central to our work, which is the difference between \textit{expected welfare} and \textit{welfare of expectations}. Both can also be used as an objective but could lead to different outcomes in terms of fairness. We demonstrate our idea by firstly introducing \textit{Jensen's inequality}.

%\begin{definition}
%\textit{Suppose welfare function $W$ is jointly concave, hence for any arbitrary distribution $\pi$, the following holds.}
%\begin{equation}
%W\bigl(\mathbb{E}_{\tau \sim \pi} [ \bm{G}(\tau) ]\bigr) \geq \mathbb{E}_{\tau \sim \pi} \bigl[W \bigl(\bm{G}(\tau) \bigr) \bigr]
%\end{equation}
%\end{definition}

%In other words, the \textit{expected welfare of rewards} is a lower bound on the \textit{welfare of expected rewards}. 
We optimize for the lower bound (which turns out to be a more computationally challenging objective, see Section~\ref{sec:algorithm}) in order to avoid treating policies as ``fair'' that are unfair in every particular episode and satisfy fairness only across several episodes on average.

%We consider expected welfare as our objective rather than welfare of expectation for the following reasons. Since welfare of expectations is an upper bound, when this objective is adopted, the agent may overestimate the welfare over users it aims to maximize. Moreover, welfare of expectations may not be suitable on particular types of MOMDP. 

\paragraph{Example (Expected Welfare).} Consider the example diagrammed in Figure~\ref{fig:expectation} with $n=2$ users. Suppose we want to learn a policy that maximizes NSW. There is a stochastic policy $\pi_1$ that yields discounted cumulative reward of $(1,0)$ with probability $0.5$ and $(0,1)$ with probability $0.5$. There is also a deterministic policy $\pi_2$ that yields $(0.5-\epsilon, 0.5-\epsilon)$ (where $\epsilon>0$ is small). %Under this condition, for any jointly concave welfare function $W$, $W\bigl(\mathbb{E}_{\tau \sim \pi_1} [ \bm{G}(\tau) ]\bigr) = W\bigl((0.5, 0.5)\bigr)$, $W\bigl(\mathbb{E}_{\tau \sim \pi_2} [ \bm{G}(\tau) ]\bigr) = W\bigl((0.5-\epsilon, 0.5-\epsilon)\bigr)$, thus if welfare of expectation is the objective, $\pi_1$ is preferred. When expected welfare is adapted, $\mathbb{E}_{\tau \sim \pi_1} \bigl[W \bigl(\bm{G}(\tau) \bigr) \bigr] = 0$, $\mathbb{E}_{\tau \sim \pi_2} \bigl[W \bigl(\bm{G}(\tau) \bigr) \bigr] = W\bigl((0.5-\epsilon, 0.5-\epsilon)\bigr)$, thus expected welfare objective would choose $\pi_2$.
The NSW expected reward under $\pi_1$ is 0.5, even though with probability 1, the NSW of every trajectory generated by $\pi_1$ is 0. By contrast, the NSW of $\pi_2$ is always $0.5-\epsilon$.  Maximizing the expected welfare, our optimization problem would prefer $\pi_2$. %Although optimizing welfare of expectation achieves higher welfare in expectation than expected welfare, observe that $\pi_1$ entirely ignores one of the users at any trajectory.

\begin{figure}[h]
    \centering
    \begin{tikzpicture}[main/.style = {draw, circle}, node distance={60mm}] 
    \node[main] (1) at (0,0) {$s_1$}; 
    \node[main] (2) at (5,0) {$s_2$};
    \draw[->, thick] (1) to node[midway, above]{$(0.5-\epsilon, 0.5-\epsilon)$}(2);
    \draw[->, thick, dotted] (1) to [out=-30, in=-150, looseness=1]node[midway, above]{$(0,1)$}(2);
    \draw[->, thick, dotted] (1) to [out=30,in=150,looseness=1]node[midway, above] {$(1,0)$}(2);
    \end{tikzpicture} 
    \caption{Example MOMDP. Dotted lines represent trajectories generated by $\pi_1$, solid line for $\pi_2$}
    \label{fig:expectation}
\end{figure}
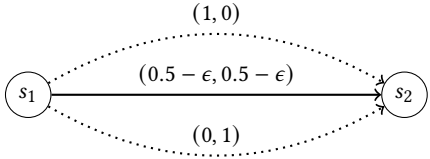

This example shows the intuition for why we choose to maximize $\mathbb{E}_{\tau \sim \pi} \bigl[W \bigl(\bm{G}(\tau) \bigr) \bigr]$. We seek to find a policy that generates trajectories with high expected welfare, a stronger property than generating high welfare of expected rewards. As we see in the next section, the problem is also computationally more challenging. 

%%%%%%%%%%%%%%%%%%%%%%%%%%%%%%%%%%%%%%%%%%%%%%%%%%%%%%%%%%%%%%%%%%%%%%%%

\section{Optimizing Welfare}
\label{sec:algorithm}
In general, one cannot provably and efficiently optimize all fair welfare functions. We first demonstrate that finding a policy that maximizes the $\NSW$ is APX-hard, implying one cannot get an arbitrarily close approximation efficiently, even in the tabular setting. We note that the same is not true for optimizing the NSW of expected rewards, for which the optimal stochastic policy can be computed efficiently \cite{AAMAS22}. %However, we demonstrate that our algorithm could obtain a reasonable approximation in practice in later sections.

Our argument follows via a reduction from the problem of allocating indivisible goods, in which $m$ items must be partitioned among $n$ users, where user $a_i$ has utility $u_{i, j} \geq 0$ for good $j$ and their utility for multiple goods is the sum of their utilities for the individual goods.

\begin{lemma} 
\cite{Hardness17} It is APX-hard to compute an indivisible allocation of goods optimizing the NSW. %Given a set $A$ of $n$ users and set $B$ of $m$ items $(n \le m)$, and each user $a_i \in A$ has a non-negative utility $u_{a_i,j}$ for each item $j \in B$, it is APX-hard to compute an allocation $\mathcal{A} = \{B_{a_i}\}_{a_i\in A}$ that maximizes the Nash social welfare of the users’ utilities $\left(\prod_{i\in[n]}u_{a_i}(\mathcal{A})\right)^{1/n}$, where an allocation $\mathcal{A}$ is defined to be a partition of $B$ into $n$ disjoint subsets $\{B_{a_i}\}_{a_i\in A}$, and user $a_i$’s utility under this allocation is $u_{a_i}(\mathcal{A}) := \sum_{j\in B_{a_i}} u_{a_i,j}$.
 \end{lemma}

From this we can show the following impossibility.
\begin{theorem}
\label{theorem:hardness}
Computing the policy that maximizes $\NSW \bigl(\bm{G}(\tau) \bigr)$ is APX-hard, even in a deterministic environment.
\end{theorem}
\begin{proof}
%We will prove that computing the policy that maximizes Nash social welfare is APX-hard, i.e. there exists a constant $\mu > 1$ such that it is NP-hard to approximate this problem within factor $\mu$, 
We reduce the problem of allocating a set of indivisible items to users with additive utilities. Given such an allocation problem, consider the following MOMDP. Find an arbitrary enumeration $\{s_1, s_2, \dots, s_m\}$ of the $m$ items. These are the states of the MOMDP. At each time step, the environment transitions from $s_j$ to $s_{j+1}$, beginning at $s_1$ and with $s_m$ as a terminal state. In each state, $s_j$, the agent has $n$ available actions $\{a_1,\dots,a_n\}$ where taking action $a_i$ in state $s_j$ corresponds to allocating item $s_j$ to user $a_i$ and receives a reward vector with $u_{i, j}$ in dimension $i$ and $0$ elsewhere. 

%with rewards $\bm{r}_{j,a_i} = u_{a_i,j} \cdot \bm{1}_i \in \mathbb{R}^n$ for $i = 1, \dots, n$, which corresponds to allocating item $j$ to user $a_i$.

Consider a policy $\pi$ that is $\mu$-approximately optimal on the $NSW$ objective. The indivisible allocation corresponding to each user receiving the set of goods for which $\pi$ chooses $a_i$ is also $\mu$-approximately optimal on the indivisible allocation problem. By Lemma 1, there exists a constant $\mu > 1$ such that it is NP-hard to approximate the NSW optimal allocation of indivisible items to agents with additive utilities within a factor $\mu$. It must also be NP-hard to approximate the NSW optimal policy in a MOMDP within a factor $\mu$.

%Suppose there exists a policy $\pi$ for such an MDP. Then the Nash social welfare of accumulated rewards following $\pi$ is $\left[\prod_i\left(\sum_{j = 1} \bm{r}_{j, \pi(s_j)} \right)_i\right]^{1/n}$. We construct an allocation $\mathcal{A}$ as follows: for each item $j \in B$, allocate item $j$ to user $\pi(s_j)$. Then user $a_i$'s utility under $\mathcal{A}$ is $u_{a_i}(\mathcal{A}) = \sum_{j\in B_{a_i}} u_{a_i,j} = \left(\sum_{j = 1} \bm{r}_{j, \pi(s_j)} \right)_i$, and therefore the Nash social welfare of the users' utilities \\ $\left(\prod_{i\in[n]}u_{a_i}(\mathcal{A})\right)^{1/n}$ is exactly the the Nash social welfare of the agent's accumulated rewards when it reaches the terminal state.
    
    \iffalse
    Now, suppose there exists an allocation $\mathcal{A}$ that maximizes the Nash social welfare of the users' utilities, $\left(\prod_{i\in[n]}u_i(\mathcal{A})\right)^{1/n}$, then the policy that maximizes the Nash social welfare can be designed as follows: for each state $S_i$, find the subset $B_a \subset B$ such that $i \in B_a$ and choose $\pi(S_i) = A_a.$ Following such a policy, the total reward that the agent accumulates when it reaches the terminal state is $[u_1(\mathcal{A}), \dots, u_n(\mathcal{A})]$, whose Nash social welfare is $\left(\prod_{i\in[n]}u_i(\mathcal{A})\right)^{1/n}$. Hence the optimality of the policy follows from the optimality of the allocation.
    \fi
    
\end{proof}

\subsection{Non-stationary Welfare Q-Learning} 
We now present our algorithm, \textit{Welfare Q-Learning}, which implements a variant on \textit{Q-learning} \cite{watkins1989learning}, a model-free temporal-difference learning algorithm \cite{sutton1988learning}. Our algorithm differs in two major ways from standard Q-learning.

\begin{enumerate}
    \item Q table updates are chosen to maximize the (potentially) nonlinear $W$, and each value $\bm{Q}(s, a)$ is a vector in $\mathbb{R}^n$ corresponding to an estimate of the future reward vector possible that is welfare maximal.
    \item Action selection is non-stationary. We keep track of the discounted cumulative reward vector within a trajectory so far and select the action that maximizes total estimated welfare including that already accumulated and future estimates.
\end{enumerate}

%However, as we will demonstrate in this section, due to the use of MOMDP framework and nonlinear scalarization function $\NSW$, the interpretation and mechanism of our algorithm is much different.

\begin{comment}

\begin{algorithm}
\caption{$\text{Cumulative-}\epsilon$-$greedy \bigl(\bm{r_{acc}}, \gamma^c\bm{\bm{Q}}(s, \cdot)\bigr)$}\label{alg:cap}
\begin{algorithmic}[1]
\State \textbf{Input:} $\bm{r_{acc}}, \gamma^c\bm{\bm{Q}}(s, a) \in \mathbb{R}^n, \forall a \in \mathcal{A}(s)$; small $\epsilon > 0$
\State \textbf{Require:} Initialize $array \gets \{\}$
\For{each action $a \in \mathcal{A}(s)$}
\State $\bm{v} \gets \bm{r_{acc}} + \gamma^c\bm{\bm{Q}}(s,a)$ \Comment{vectorized addition}
\State $array[a] \gets W(\bm{v})$
\EndFor
\State$$
A \leftarrow
\begin{cases}
\argmax_a W \left( \bm{r_{acc}} + \gamma^c\bm{\bm{Q}}(s,a) \right) & \text{with probability } 1- \epsilon \\
\text{a uniform random action} & \text{with probability } \epsilon
\end{cases}
$$\\
\Return $A$
\end{algorithmic}
\end{algorithm}

\end{comment}

\begin{algorithm}
\caption{\textit{Welfare Q-Learning}}\label{alg:cap}
\begin{algorithmic}[1]
\State \textbf{Parameters:} Learning rate $\alpha \in (0, 1]$, Discount factor $\gamma \in [0, 1)$, exploration rate $\epsilon > 0$, welfare function $W$
\State \textbf{Require:} Initialize $\bm{\bm{Q}}(s, a)$ for all $s \in \mathcal{S}, a \in \mathcal{A}(s)$ arbitrarily except $\bm{\bm{Q}}(\bar{s}, \cdot) \gets 0$ for terminal states $\bar{s}$
\For{each episode}
\State Initialize $s \gets s_1$, $\bm{r_{acc}} \gets \bm{0}$, $c\gets0$

\Repeat \Comment{each step in an episode}
%\State $a \gets \text{Cumulative-}\epsilon$-$greedy\bigl(\bm{r_{acc}}, \gamma^c\bm{\bm{Q}}(s, \cdot)\bigr)$
\State$$a \leftarrow
\begin{cases}
\text{a uniform random action} & \text{with} \Pr(\epsilon) \\
\argmax_{a'} W \left( \bm{r_{acc}} + \gamma^c\bm{\bm{Q}}(s,a') \right) & \text{otherwise}
\end{cases}$$
\State Take action $a$, observe $\bm{r}, s'$
\State $a^* \gets \argmax_a W[\gamma \bm{\bm{Q}}(s', a)]$
\State $\bm{\bm{Q}}(s, a) \leftarrow \bm{\bm{Q}}(s, a) + \alpha [\bm{r} + \gamma \bm{\bm{Q}}(s', a^*) - \bm{\bm{Q}}(s, a)]$
\State $s \leftarrow s'$
\State $\bm{r_{acc}} \gets \bm{r_{acc}} + \gamma^c \bm{r}$
\State $c \gets c+1$
\Until{$s$ is terminal}
\EndFor
\end{algorithmic}
\end{algorithm}

%One of our major contributions is the discovery of non-stationary action selection policy for optimizing welfare functions. The primary motivation for adopting such method is due to nonlinearity of the welfare function and inherent nature of fair resource allocation in MOMDP framework. Consider the following example, where non-stationary policy would result in a higher expected welfare than stationary policy:\\

We show experimentally in Section~\ref{sec:experiments} that both of these changes are crucial to achieving high expected welfare for fair welfare functions such as NSW. We show in Theorem~\ref{theorem:convergence} that the algorithm still provably converges even with the nonlinear learning updates on a vector-valued Q table. To see the intuition for the significance of non-stationary action selection for optimizing the expected welfare of a nonlinear welfare function, we present the following example.

\subsubsection{Non-stationarity}
Consider the MOMDP diagrammed in Figure~\ref{fig:nonstationary}. There are $n$ users. $|\mathcal{A}(s_1)| = 1$, and at $t = 1$, the environment transitions from $s_1$ to $s_{2,i}$ with probability $1/n$ and reward $\bm{0} \in \mathbb{R}^n$. At $t = 2$, the environment transitions from $s_{2,i}$ to $s_3$ with probability $1$ and a reward vector that has $1$ at the $i^{th}$ component and $0$ elsewhere. At $t = 3$, the agent gets to choose from $n$ actions, $a_1, a_2, \dots, a_n$ that yield associated with rewards $(1,\dots,1,0), (1,\dots,0,1), \dots, (0, \dots, 1,1)$, respectively. 

A stationary policy cannot achieve expected NSW greater than $1/n$ on this MOMDP, even though a non-stationary policy can achieve expected NSW of 1. Without loss of generality, assume a stationary stochastic policy $\pi$ chooses the action $a_i$ with probability $p_i$ for $i = 1, \dots, n$ such that $\sum_i p_i = 1$. Then the expected Nash social welfare resulting from such a policy is
$$
\mathbb{E}_{\tau \sim \pi}\left[\NSW\left[\sum_t \bm{r_t}\right]\right] = \sum_{i=1}^n \frac1n p_i = \frac1n
$$

On the other hand, an optimal non-stationary policy $\pi^*$ that keeps track of accumulated rewards is able to choose the correct complementary action at $s_3$ depending on the random transition at $t = 1$, so that $\mathbb{E}_{\tau \sim \pi^*}\left[\NSW\left[\sum_t \bm{r_t}\right]\right] = 1$.

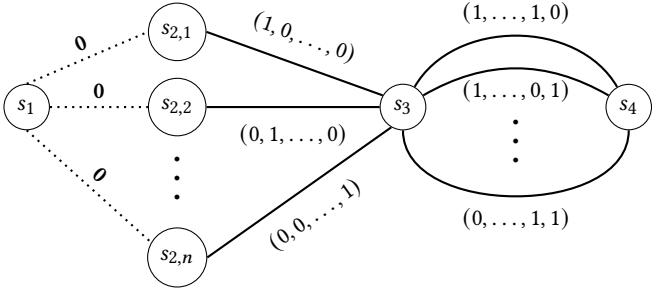
\begin{figure}[h]
    \centering
    \begin{tikzpicture}[main/.style = {draw, circle}, node distance={40mm}] 
    \node[main] (1) at (-1,0) {$s_1$}; 
    \node[main] (2) at (1,1) {$s_{2,1}$};
    \node[main] (4) at (1,0) {$s_{2,2}$};
    \node[main] (5) at (1,-2) {$s_{2,n}$};
    \path (4) -- (5) node [black, font=\Huge, midway, sloped] {$\dots$};
    \draw[-, thick, dotted] (1) to [out=0, in=180, looseness=0]node[midway, above, sloped]{$\bm{0}$}(4);
    \draw[-, thick, dotted] (1) to [out=90,in=180,looseness=0]node[midway, above, sloped] {$\bm{0}$}(2);
    \draw[-, thick, dotted] (1) to [out=270,in=150,looseness=0]node[midway, above, sloped] {$\bm{0}$}(5);
    \node[main] (6) at (4,0) {$s_3$}; 
    \draw[-, thick] (2) to [out=0, in=150, looseness=0]node[midway, above=3, sloped]{$(1,0, \dots, 0)$}(6);
    \draw[-, thick] (4) to [out=0,in=180,looseness=0]node[midway, below=3]{$(0, 1, \dots, 0)$}(6);
    \draw[-, thick] (5) to [out=0,in=-120,looseness=0]node[midway, below=3, sloped]{$(0, 0, \dots, 1)$}(6);
    \node[main] (3) at (7,0) {$s_4$};
    \draw[-, thick] (6) to [out=60, in=120, looseness=1]node[midway, above=1, sloped]{$(1, \dots, 1, 0)$}(3);
    \draw[-, thick] (6) to [out=30, in=150, looseness=1]node[midway, below=1, sloped]{$(1, \dots, 0, 1)$}(3);
    \path (5.5,0) -- (5.5,-1) node [black, font=\Huge, midway, sloped] {$\dots$};
    \draw[-, thick] (6) to [out=-90, in=-90, looseness=1]node[midway, below=1, sloped]{$(0, \dots, 1, 1)$}(3);
    \end{tikzpicture} 
    \caption{Example of a stochastic MOMDP in which the expected Nash social welfare of stationary policies shrinks to $0$ as the dimension of rewards $n$ increases.}
    \label{fig:nonstationary}
\end{figure}

This example shows why, when optimizing for fairness, one should keep track of the discounted cumulative reward $\bm{r_{acc}}$ allocated to thus far. A stationary policy does not distinguish which users have received higher or lower rewards on a given trajectory thus far. Our approach seeks to be greedy on the sum of this discounted cumulative reward and future estimates of reward stored in the Q table. In this manner, users that have not received much reward so far within an episode are prioritized in action selection.

%This non-stationary, greedy action selection could be combined with various exploration strategies in current RL literature (we adopted $\epsilon$-$greedy$, shown on line 6). Intuitively, entries in the Q-table represents an estimate of the "future" returns, while the accumulated discounted reward $\bm{r_{acc}}$ represents the agent's knowledge of the past trajectory. By combining an approximate of the past and future, the agent thus has more information to choose a fair policy at the current time step. 

It is worth noting that in order to incorporate both information from the past and future, we need a consistent accounting of discounting for both terms to ensure our agent has correct information from both past an future when deciding a fair policy:
\begin{align*}
    \bm{r_{acc_t}} &= \bm{r}_1 + \gamma \bm{r}_2 + \gamma^2 \bm{r}_3 + ... + \gamma^{t-1} \bm{r}_t\\
    \gamma^t \bm{\bm{Q}}_\pi(s,a) &= \mathbb{E}_\pi \left[\gamma^t\bm{r}_{t+1}+\gamma^{t+1}\bm{r}_{t+2}+... \mid S_t=s, A_t=a \right].
\end{align*}

It is also important that each value in our learned Q table stores the vector of estimated future rewards rather than simply a scalar estimate of the welfare achievable, as the true greedy objective is the welfare of the sum of these vectors, not sum of the welfare of the two. Because of this, our Q table has size $|\mathcal{S}| \times |\mathcal{A}| \times n$ where $n$ is the dimension of the reward. Indeed, as we show in Section~\ref{sec:experiments}, this does result in decreased convergence rates of the algorithm for larger $n$.

%We also treated Q table differently than standard RL approaches. To optimize $\NSW$, we modify the Q-table to have dimension $|S| \times |A| \times n$, where $n$ is the dimension of the reward. Each entry of the Q-table corresponding to a state-action pair is now a vector. When updating the Q-table, based on the next state $s'$, we choose the action that maximizes $W[\gamma \bm{Q}(s',a)]$. Such action $a^*$ is then applied to update the corresponding entry in Q-table.\\

\subsubsection{Convergence}
We now argue that the Q values of our algorithm converge to an interpretable set of values.
\begin{theorem}
\label{theorem:convergence}
\textit{For discount factor $\gamma \in [0,1)$, the Q values of Welfare Q-Learning converge.}
\end{theorem}
\begin{proof}
The proof follows from \textit{Banach’s Fixed-Point Theorem} \cite{banach1922operations}, which guarantees the existence and uniqueness of fixed-point of a contraction map on a complete metric space. The update step of Algorithm~\ref{alg:cap} can be seen as applying, in expectation, an operator on the Q-table. To apply the fixed-point theorem, we define a metric and show that this operator is a contraction. The Generalized Banach Fixed-Point Theorem therefore implies that Algorithm~\ref{alg:cap} will converge toward a unique fixed-point of this operator. Proofs for the two technical lemmas are provided in the full version \cite{https://doi.org/10.48550/arxiv.2212.01382}.

\begin{definition} Define $d$ on the space of Q-tables $\mathcal{Q}$ by
    $$
    d(\bm{Q},\bm{Q}') := \max_{\substack{s \in \mathcal{S}, a \in \mathcal{A}\\ i \in \{1,\dots,d\}} } \left|Q_i(s,a)) - (Q_i'(s,a))\right|.
    $$
\end{definition} 

\begin{lemma}
    $\langle \mathcal{Q}, d \rangle$ is a complete metric space.
\end{lemma}
    
Next, we define the \textit{optimality filter} $\mathcal{H}$.
    
\begin{definition}
\label{def:optimalityfilter}
    The optimality filter $\mathcal{H}$ is an operator defined as
    $$
    (\mathcal{H}\bm{\bm{Q}})(s) = \text{arg}_{\bm{Q}} \max_{a' \in \mathcal{A}} W(\bm{Q}(s, a')),
    $$
    where $\text{arg}_{\bm{Q}}$ takes the multi-objective value corresponding to the maximum, i.e., $\bm{\bm{Q}}(S, a'')$ such that
    $a'' \in \text{arg}\max_{a\in \mathcal{A}} W(\bm{\bm{Q}}(S,a'))$.
\end{definition}
    
We can define an \textit{optimality operator} $\mathcal{T}$ in terms of the optimal filter.
    
\begin{definition}
    The \textit{optimality operator} $\mathcal{T}$ is defined as
    $$
    (\mathcal{T}\bm{\bm{Q}})(s,a)= \bm{r}(s,a) + \gamma \mathbb{E}_{s' \sim \mathcal{P}(\cdot| s, a)}(\mathcal{H}\bm{\bm{Q}})(s').
    $$
\end{definition}
    
Note that in the algorithm, at each iteration, we sample from $\mathcal{P}(\cdot | s,a)$ to make an update. If the learning rate $\alpha$ satisfies the usual Robbins-Monro type conditions, namely $\sum \alpha = \infty$ and $\sum \alpha^2 < \infty$, the update at each iteration is, in expectation, applying the optimality operator $\mathcal{T}$. Thus, to show convergence, it suffices to show that iteratively applying $\mathcal{T}$ on any $Q$ leads to a unique $Q$-table.
    
\begin{lemma}[The optimality operator is a contraction]
    Let $\bm{\bm{Q}}, \bm{\bm{Q}'}$ be any two multi-objective \ $\bm{Q}$-value functions, then $d(\mathcal{T}\bm{\bm{Q}}, \mathcal{T}\bm{\bm{Q}'}) 
    \\ \le \gamma d(\bm{\bm{Q}}, \bm{\bm{Q}'})$, where $\gamma \in [0,1)$ is the discount factor of the underlying MOMDP.
\end{lemma} 
    
    \noindent
    Finally, since in our design, the distance $d$ is a well-defined metric, to prove convergence to a unique fixed point, we will use the Generalized Banach Fixed Point Theorem as in \cite{morl_convergence}.
    
\begin{lemma}[Generalized Banach Fixed-Point \cite{morl_convergence}]
    Given that $\mathcal{T}$ is a contraction mapping with Lipschitz coefficient $\gamma$ on the complete pseudo-metric space $\langle \mathcal{Q}, d \rangle$, then there exists $\bm{Q}^*$ such that
    $$
    \lim_{n\to \infty} d(\mathcal{T}^n\bm{\bm{Q}}, \bm{\bm{Q}^*}) = 0
    $$
    for any $\bm{\bm{Q}} \in \mathcal{Q}$.
\end{lemma}
    
It follows from the Lemmas that there exists $\bm{Q}^*$ such that
    $$
    \lim_{n\to \infty} d(\mathcal{T}^n\bm{\bm{Q}}, \bm{\bm{Q}^*}) = 0
    $$
    for any $\bm{\bm{Q}} \in \mathcal{Q}$. 
    
Note that if the distance $d$ between two tables is $0$, the tables are equal. In other words, iteratively applying the optimality operator $\mathcal{T}$ to a multi-objective Q-table will converge toward a unique Q-table. Since the update step in Algorithm~\ref{alg:cap} is applying $\mathcal{T}$ in expectation, the algorithm also converges toward a unique Q-table. This concludes the proof of Theorem \ref{theorem:convergence}.
\end{proof}
    
Note that the convergence result is not dependant on any particular welfare function $W$, but applies generally. 

Next, we provide an interpretation of the unique fixed point $\bm{Q}^*$ of our algorithm. Note that the Bellman optimality conditions are not satisifed for nonlinear welfare functions, precluding the typical interpretation of the optimal Q-table. We nevertheless show that a very similar interpretation can be given in which $\bm{Q}^*(s, a)$ provides a lower bound estimate on the discounted cumulative reward vector that is achievable after taking action $a$ in state $s$ and then optimizing for $W$.
    
\begin{definition}
\label{def:greedystationary}
Define a policy $\pi^*_Q$ given a Q-table as follows: at a given state $s$, let $\pi^*_Q(s_t) = a^*_t = \argmax_a W(\bm{Q}(s, a))$ and let $\bm{r}(s_t,a)$ be the immediate reward of performing action $a$ in state $s$. Then the value function corresponding to $\pi^*_Q(s)$ is
    $$
    \bm{V}_{\pi_Q^*(s)} = \mathbb{E}_{\tau \sim (\mathcal{P}, \pi^*)| s_1 = s} \sum_{t = 1}^\infty \gamma^{t-1} \bm{r}(s_t, a_t^*).
    $$
\end{definition}

\begin{lemma}[Interpreting the Fixed-Point]
\label{lemma:interpret}
    Let $\bm{Q}^*$ be the unique fixed point of the algorithm, i.e. $\bm{Q}^* = \mathcal{T}\bm{Q}^*$, then
    $$
    \bm{Q}^*(s, a) = \bm{r}(s,a) + \gamma \mathbb{E}_{s' \sim \mathcal{P}(\cdot | s,a)} \bm{V}_{\pi^*_{\bm{Q}^*}}(s').
    $$
\end{lemma}
    \begin{proof}
    Since $Q^*$ is the fixed point of the algorithm, $Q^* = \mathcal{T}Q^*$. Expanding this using the definition of $\mathcal{T}$, we get
    $$
    \bm{Q}^*(s,a) = \bm{r}(s,a) + \gamma \mathbb{E}_{s' \sim \mathcal{P}(\cdot | s,a)} \text{arg}_{\bm{Q}} \max_{a' \in \mathcal{A}(s')} W(\bm{Q}^*(s', a')).
    $$
    So it suffices to show that $\text{arg}_{\bm{Q}} \max_{a' \in \mathcal{A}} W(\bm{\bm{Q^*}}(s', a')) = \bm{V}_{\pi^*_{\bm{Q}^*}}(s')$. But expanding the LHS and RHS recursively and recalling $a_0^* = \argmax_{a'} W(\bm{Q}^*(s', a'))$ (see Definition~\ref{def:greedystationary}) as well as the definition of $\text{arg}_Q$ in Definition \ref{def:optimalityfilter}, we get
    $$
    \begin{aligned}
    &\text{arg}_{\bm{Q}} \max_{a' \in \mathcal{A}} W(\bm{\bm{Q^*}}(s', a')) = \bm{Q^*}(s', a_1^*) = \mathcal{T}[\bm{Q^*}(s', a_0^*)] \\
    &= \bm{r}(s',a_1^*) + \gamma \mathbb{E}_{s'' \sim \mathcal{P}(\cdot | s',a_1^*)} \text{arg}_{\bm{Q}} \max_{a'' \in \mathcal{A}} W(\bm{\bm{Q^*}}(s'', a''))
    \end{aligned}
    $$
    and
    $$
    \begin{aligned}
    &\bm{V}_{\pi^*_{\bm{Q}^*}}(s') = \mathbb{E}_{\tau \sim (\mathcal{P}, \pi^*_{\bm{Q}^*})| s_1 = s'} \sum_{t = 1}^\infty \gamma^{t-1} \bm{r}(s_t, a_t^*) \\
    &= r(s', a_1^*) + \gamma \mathbb{E}_{s'' \sim \mathcal{P}(\cdot | s',a')} \left[\mathbb{E}_{\tau \sim (\mathcal{P}, \pi^*_{\bm{Q}^*})| s_1 = s''} \sum_{t = 1}^\infty \gamma^{t-1} \bm{r}(s_t, a_t^*)\right] \\
    &= r(s', a_1^*) + \gamma \mathbb{E}_{s'' \sim \mathcal{P}(\cdot | s',a')} \left[\bm{V}_{\pi^*_{\bm{Q}^*}}(s'')\right].
    \end{aligned}
    $$
    So in turn, it suffices to show that $\text{arg}_{\bm{Q}} \max_{a'' \in \mathcal{A}} W(\bm{\bm{Q^*}}(s'', a''))$ 
    is equal to $\bm{V}_{\pi^*_{\bm{Q}^*}}(s'')$. Repeat this argument until the agent is in one of the terminal states $\bar{s}$.
    Note that $\bm{Q}(\bar{s}, \cdot) = 0$, i.e. Q-values for the terminal states are zero for all actions. So
    $$
    \text{arg}_{\bm{Q}} \max_{a \in \mathcal{A}} W(\bm{\bm{Q^*}}(\bar{s}, a)) = \bm{0} = \bm{V}_{\pi^*_{\bm{Q}^*}}(\bar{s}).
    $$
    This completes the proof of Lemma~\ref{lemma:interpret}.
    \end{proof}
    
Lemma \ref{lemma:interpret} implies that each entry $\bm{Q}^*(s, a)$ represents what the agent could actually expect to receive as total discounted future reward in expectation if one performs action $a$ in initial state $s$, then follows greedy stationary action selection using the Q-table. This is essentially the same interpretation as in traditional scalar Q-learning, except that there is no optimality guarantee of the result for nonlinear welfare optimization. 

Intuitively, these entries only serve as an estimate of the lower bound of total discounted rewards that are achievable in the future. Algorithm~\ref{alg:cap} combines these estimates with keeping track of the discounted cumulative rewards up to a given point in order to greedily optimize for the welfare of the sum of the two vectors. This is the essential intuition behind Algorithm~\ref{alg:cap}.

\section{Experiments}
\label{sec:experiments}
We run experiments under two simulated environments diagrammed in Figure~\ref{fig:environments}. \footnote{The implementation is available at \href{https://github.com/MuhangTian/Fair-MORL-AAMAS}{https://github.com/MuhangTian/Fair-MORL-AAMAS}} Our results demonstrate that (a) \textit{\textit{Welfare Q-Learning}} is effective in finding policies with high expected welfare compared with other baselines, (b) the rate of convergence depends on $n$, the dimensionality of the reward space, and (c) linear scalarization and mixture policies are generally inadequate for optimizing fair welfare functions. All results for all algorithms are obtained using an average of $\NSW$ and utilitarian welfare of $\bm{r}_{acc}$ for each episode over 50 runs. 
%Hyper-parameters are tuned and seed is consistent between runs. 
For all the experiments on both environments, each unit on the $x$-axis corresponds to an episode, which equals to $10000$ timesteps or action selections in the environment. The duration of a timestep is the same for all methods.

\begin{figure}[h]
\begin{subfigure}{0.5\linewidth}
     \centering
     \includegraphics[width=0.75\linewidth]{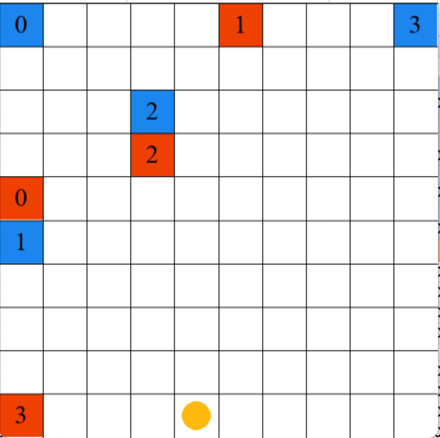}
     \caption{Taxi}
     \label{fig : env_taxi}
 \end{subfigure}%
 \begin{subfigure}{0.5\linewidth}
     \centering
     \includegraphics[width=0.75\linewidth]{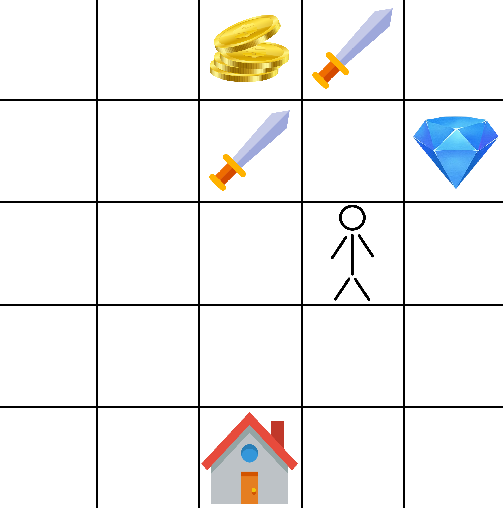}
     \caption{Resource Gathering \cite{Alegre+2022bnaic}}
\end{subfigure}
\caption{Simulated Environments}
\label{fig:environments}
\end{figure}

\subsection{Metrics, Methods, and Baseline Algorithms}
\begin{figure*}[h]
\begin{subfigure}{0.33\linewidth}
     \centering
     \includegraphics[width=1.05\linewidth]{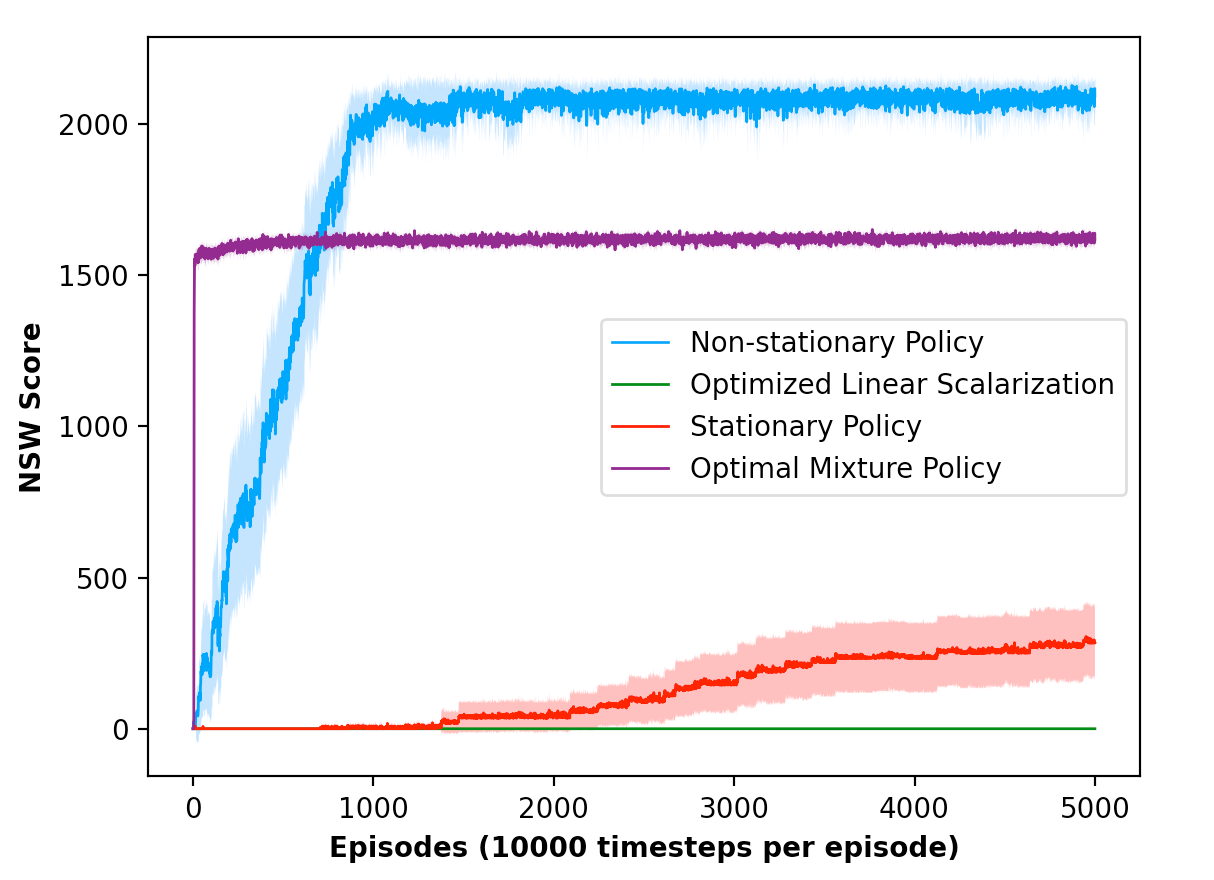}
     \caption{Online Performance (NSW)}
 \end{subfigure}%
 \begin{subfigure}{0.33\linewidth}
     \centering
     \includegraphics[width=1.05\linewidth]{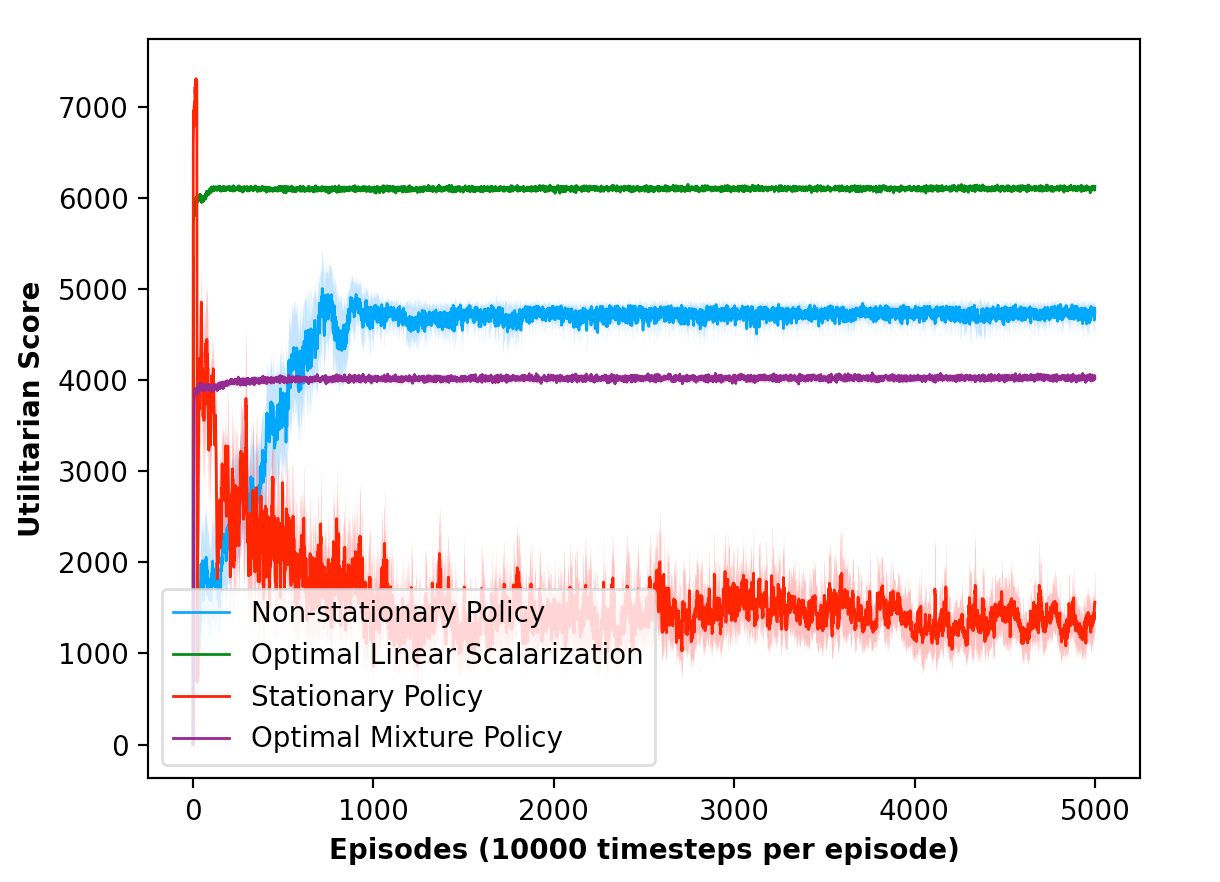}
     \caption{Online Performance (Utilitarian)}
\end{subfigure}%
\begin{subfigure}{0.33\linewidth}
    \centering
    \includegraphics[width=1.05\linewidth]{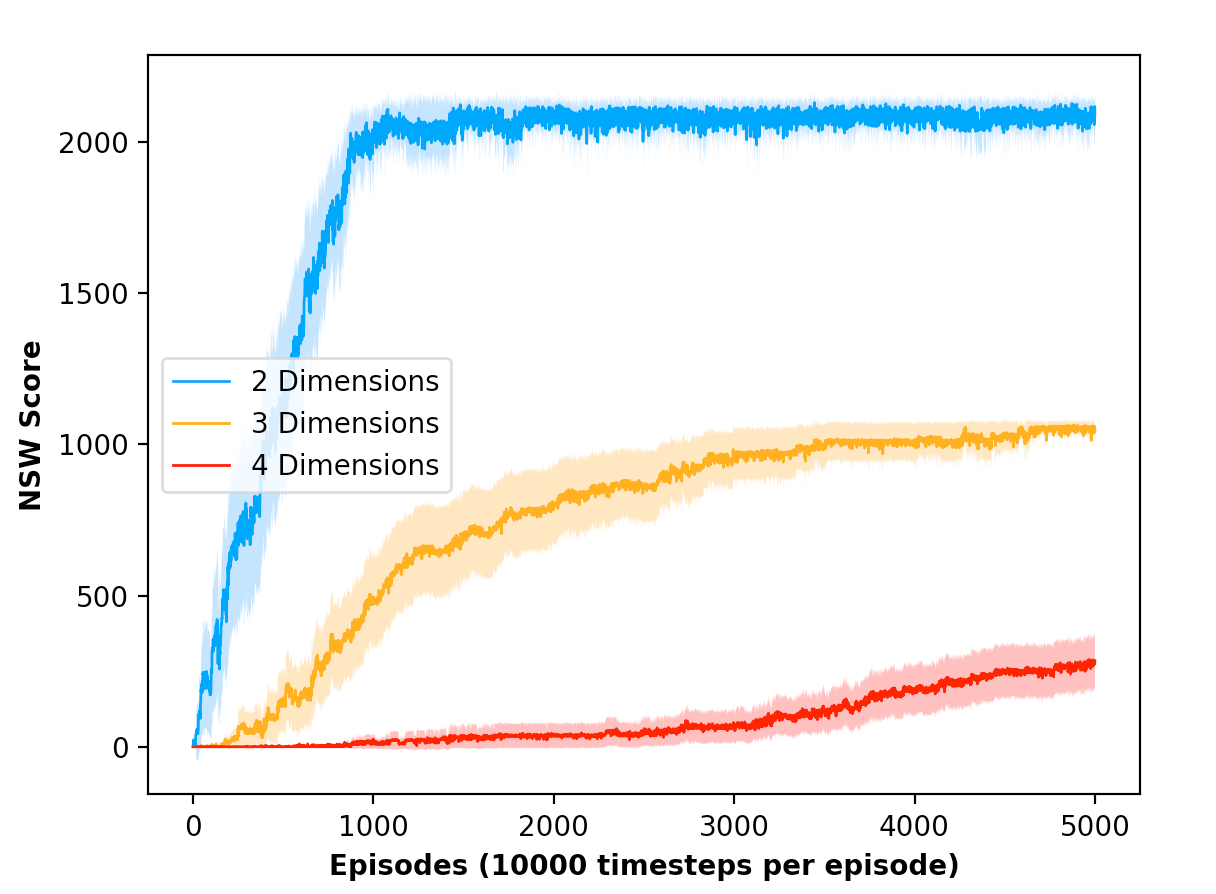}
    \caption{Taxi with Different Dimensions}
    \label{subfig:dimension}
\end{subfigure}
\caption{Experiment Results for Taxi Environment. Non-stationary Policy is Welfare Q-Learning}
\label{fig:resultstaxi}
\end{figure*}

All of our experiments attempt to optimize NSW (results for other welfare functions are provided in the full version \cite{https://doi.org/10.48550/arxiv.2212.01382}). We measure the $\NSW$ function on $\bm{r}_{acc}$ earned thus far. As elaborated in Section~\ref{alg:cap}, $\NSW$ satisfies all of our basic desiderata plus scale invariance, and is an intermediate welfare function between the extremes of egalitarian and utilitarian social welfare. For comparison, we also show utilitarian social welfare (the arithmetic mean of reward vectors) alongside NSW.

The geometric mean can be numerically unstable. In experiments we work with the log transform of $\NSW$. That is, instead of maximizing $\NSW(\bm{R}) = \left(\prod_i^n R_i\right)^{1/n}$, we equivalently maximize $\sum_i^n \ln(R_i + \lambda)$, where $\lambda > 0$ is included as a smoothing factor in case $R_i = 0$. Due to the nature of the $\NSW$ function, the $\NSW$ of rewards with negative elements is undefined, or alternatively can be defined as $-\infty$. Thus, our scope of exploration is restricted to policies that yield all non-negative accumulated rewards. %It is worth noting that this restriction does not incur much loss of generality since a ``good" and ``fair" policy in most cases should give each user certain amount of positive reward.

%\subsubsection{Methods} For \textit{\textit{Welfare Q-Learning}}, we choose $\NSW$ as our welfare function $W$ (results for other welfare functions are in supplementary material). Since computing the geometric mean can be computationally burdensome and numerically unstable, in practice, we maximize logarithmic transform of the Nash Social Welfare. Instead of maximizing $\NSW(\bm{R}) = \left(\prod_i^n R_i\right)^{1/n}$, we equivalently maximize $\sum_i^n \ln(R_i + \lambda)$, where $\lambda > 0$ is included as a smoothing factor in case $R_i = 0$. Due to the nature of the $\NSW$ function, the $\NSW$ of rewards with negative elements is undefined. Thus, our scope of exploration is restricted to policies that yield all non-negative accumulated sub-rewards. It is worth noting that this restriction does not incur much loss of generality since a ``good" and ``fair" policy in most cases should give each user certain amount of positive reward.

\subsubsection{Baseline Algorithms} We compare our algorithm against three baselines. 
\begin{enumerate}[itemsep=2mm, leftmargin=4mm]
    \item \textit{Optimal Linear Scalarization.} A simple MORL technique is to apply linear scalarization on the Q-table \cite{van2013scalarized}. Given weights $\bm{w} \in \mathbb{R}^n$, where $\sum_{i=1}^n w_i = 1$ for $n$ objectives, let $SQ(s,a)=\sum_{i=1}^n w_i Q(s,a)_i$, where $Q(s,a)_i$ is the Q-value for $i^{th}$ objective. For each time step, $SQ(s,a)$ is treated by the algorithm as the objective to perform both action selection with $\epsilon$-$greedy$ and learning updates of $\bm{Q}(s,a)$. We chose the weights $\bm{w}$ that performed best on the NSW objective as determined by a grid search through combinations of $\bm{w}$. We include this algorithm as the baseline to demonstrate the limitations of linear scalarization on nonlinear objectives and show the importance of our nonlinear learning updates in Algorithm~\ref{alg:cap}.
    
    \item \textit{Stationary Policy.} Algorithm~\ref{alg:cap}, \textit{Welfare Q-Learning}, learns a particular Q-table corresponding to a (potentially) nonlinear welfare function, then performs non-stationary action selection. By contrast, we also show the results if one stationary $\epsilon$-greedy action selection on the same learned Q-table. That is, the stationary policy algorithm does not consider $\bm{r}_{acc}$ in its action selection. We include this algorithm to show the importance of non-stationary action selection.
    
    \item \textit{Optimal Mixture Policy.} \cite{vamplew2009constructing} proposed the idea of combining multiple Pareto Optimal base policies into a single mixture policy. We chose our base policies as those that optimize each dimension of the reward vector independently. The algorithm then uses one of these policies for $I$ time steps, then switches to the next. To determine the optimal value of $I$ for optimizing $\NSW$, we used a grid search. We use the resulting optimal value $I^*$. This baseline examines the effectiveness of intuitive approaches (combining optimal policies for each user) for optimizing fairness.
\end{enumerate}

\subsection{Taxi Environment}
\paragraph{Description.} Inspired by the taxi toy example problem for single-objective RL \cite{dietterich2000hierarchical}, we designed a multi-objective taxi environment \footnote{See detailed description of the environment in the full version \cite{https://doi.org/10.48550/arxiv.2212.01382}}. In this grid world, our agent is a taxi driver whose goal is to deliver passengers from their origins to their destinations. There are $n$ origin-destination pairs, one for each dimension of reward, and the agent earns reward in that dimension when dropping off a passenger from that origin-destination pair. There are an unlimited number of passengers for each origin-destination pair, but the taxi can only take one passenger at a time. This constraint enforces objectives to be conflicting, thus our agent's fairness performance becomes more important---it should provide its delivery service to each origin successfully and fairly over time, without ignoring origins that are more difficult to deliver (such as number 3 origin/destination pair in Figure \ref{fig : env_taxi}).

\begin{figure*}[h]
\begin{subfigure}{.33\linewidth}
\includegraphics[width=1.1\linewidth]{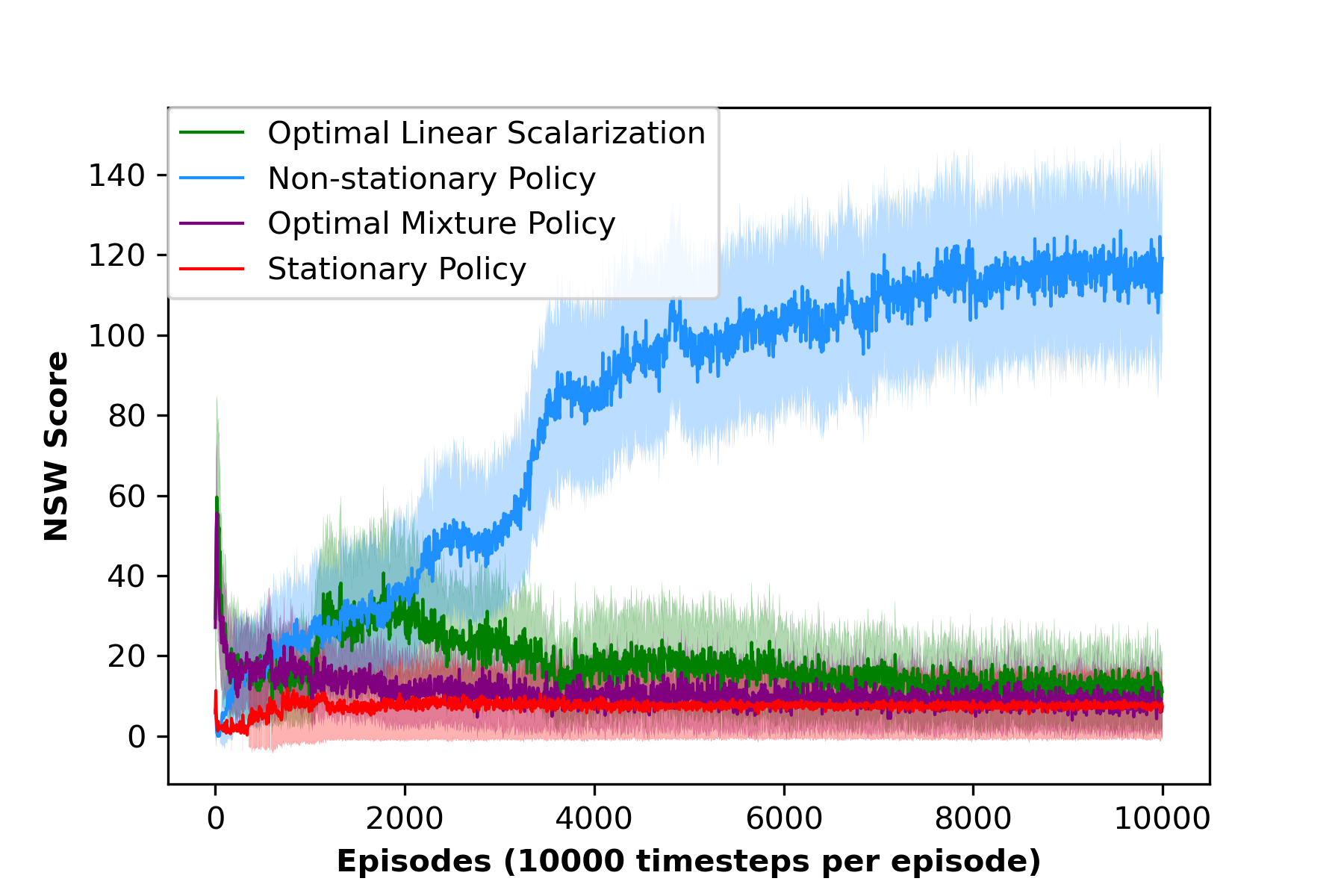}
\caption{Online Performance (NSW)}
    \label{fig:my_label}
\end{subfigure}%
\begin{subfigure}{.33\linewidth}
\centering  \includegraphics[width=1.1\linewidth]{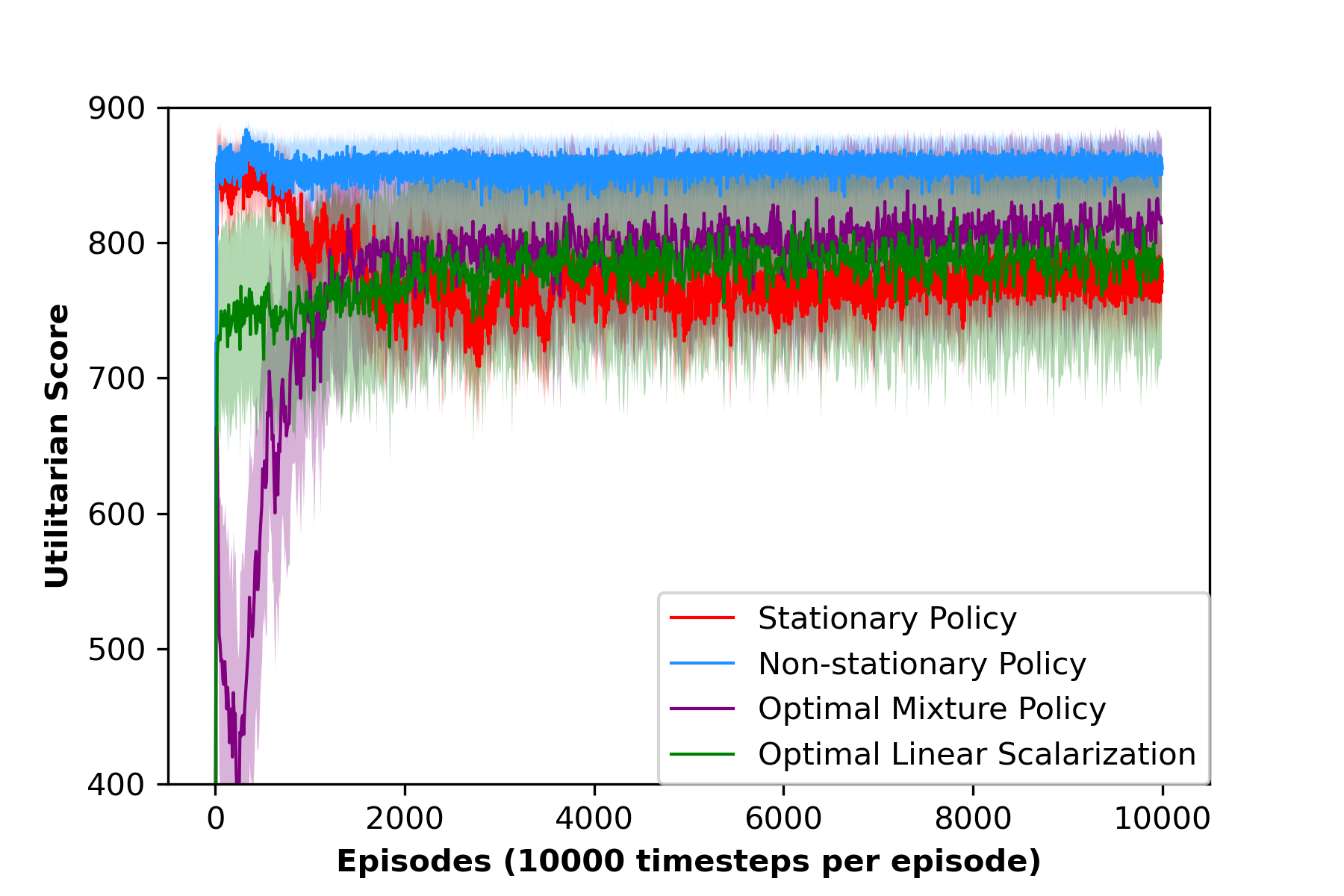}
    \caption{Online Performance (Utilitarian)}
    \label{fig:my_label}
\end{subfigure}%
\begin{subfigure}{.33\linewidth}
\centering 
 \includegraphics[width=0.7\linewidth]{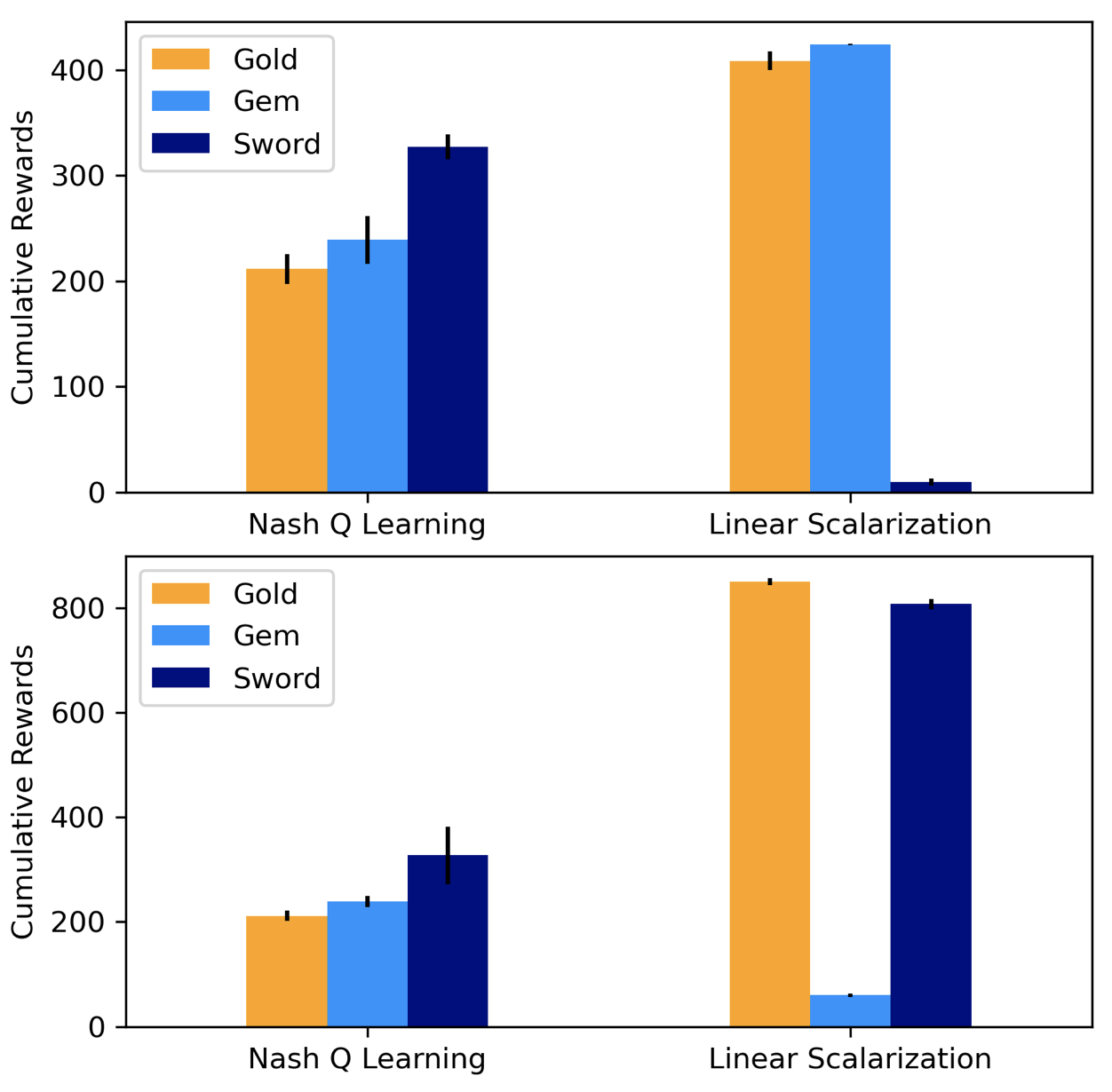}
 \caption{Distribution of Resources for \\ Equal (Top) and Scaled Rewards (Bottom)}
 \label{histograms}
\end{subfigure}
\caption{Experiment Results for RG Environment. Non-stationary Policy is Welfare Q-Learning}
\end{figure*}

\paragraph{Results.} Results are shown in Figure~\ref{fig:resultstaxi}. \textit{\textit{Welfare Q-Learning}} achieves the maximum average $\NSW$ score among all the algorithms, and still manages to achieve the second highest utilitarian score. We observe that our non-stationary policy outperforms the stationary policy on the same Q-table for both $\NSW$ and utilitarian score. Note that a stationary policy that optimizes NSW on this environment must essentially make a large loop always taking each origin-destination pair in turn, whereas a non-stationary policy can selectively optimize a single origin-destination pair for several time steps before switching to another pair. %Based on the number of completed deliveries for passengers at each location in the past, non-stationary policy enables our taxi driver to choose an action that would approximately create the most fair outcome, whereas stationary policy is negligent of the past, causing it to have lower average $\NSW$ score. 

Linear scalarization has the lowest average $\NSW$ since there simply does not exist a set of weights that would produce accumulated rewards in all dimensions. It achieves highest utilitarian score since it favors to complete delivery for closest origin/destination pairs (such as index 2 pair in Figure \ref{fig : env_taxi}). The mixture policy performs generally well but slightly lower than that of \textit{\textit{Welfare Q-Learning}}, this is because an optimal fair policy in this environment does have the structure of alternating between optimizing on different dimensions at a time. This is not, however, true in general, as is seen in the next environment. Although the mixture policy converges quickly (each dimension independently is very easy to optimize), such performance is also subject to finding the optimal interval for the taxi environment $I^*$ ($227$ timesteps) via a search through the parameter space, which involves a computational cost not reflected by the figures.

For \textit{\textit{Welfare Q-Learning}}, we observe that there is an inverse correlation between the dimensionality of the reward space $n$ and the rate of convergence, as shown in Figure~\ref{subfig:dimension}. A possible explanation is that the increase in dimensionality increases the size of the Q-table, which is of size $|\mathcal{S}| \times |\mathcal{A}| \times n$, thus more updates are needed to converge.

\subsection{Resource Gathering (RG) Environment}

\paragraph{Description.} 
Building on the resource-collection domain \cite{barrett2008learning}, we modified the environment to include more complexity and randomness, and we also restructured it as a multi-objective setting. \footnote{See detailed description of the environment in the full version \cite{https://doi.org/10.48550/arxiv.2212.01382}} In this grid world, our agent collects three types of resources (gold, gem, and sword) spawned randomly at different locations which also disappear with a given probability. The dimensions of reward correspond to the resources. That is, gathering a given resource earns reward in that dimension and 0 the others. The goal of our agent is to collect as many resources as possible while maintaining a balance between difference types of resources.

\paragraph{Results.} 
\textit{\textit{Welfare Q-Learning}} achieves the maximum average NSW score among all the algorithms. Although linear scalarization achieves a reasonable utilitarian score, it fails to satisfy fairness, as it largely ignores one of the dimensions/resource types. The diminishing performance of the linear scalarization on NSW is presumably due to the algorithm increasingly coming to optimize for some dimensions and ignoring others. The mixture policy achieved the most comparable utilitarian score to the non-stationary policy, yet substandard NSW score. Unlike in the taxi environment, the optimal policy in RG is not characterized by optimizing each objective sequentially, which results in ignoring certain type of nearby resources.

An additional desirable property of optimizing NSW is \textit{scale invariance}, that the optimizing policy is invariant with respect to changes in the scale of a dimension of reward. We demonstrate this empirically by comparing the distribution over resources gathered using Nash Welfare Q-Learning (\textit{Welfare Q-Learning} with NSW as the welfare function) versus linear scalarization in Figure~\ref{histograms}. When all resources are worth the same amount of reward in their respective dimension (Figure \ref{histograms} Top) the NSW agent achieves a balanced distribution between the three types of resources, while the linear scalarized agent tends to neglect swords in the third dimension. To rectify this one might try to rescale the rewards (Figure \ref{histograms} Bottom) so that the value of swords is scaled to 50, but then the scalarized policy changes drastically to gather swords and ignore gems. However, the NSW agent is immune to scaling of this kind, retaining the same general distribution of reward in both cases. In other words, fair welfare (and especially NSW) optimization in a MOMDP may be more robust to specifications of the environment as compared to techniques based on linear scalarization. 

\balance
\section{Future work}
\label{sec:future}
While we know that exact optimization of, say, NSW is intractable, we do not know what provable approximation factors might be achievable in tabular MORL. Additionally, we observe a dependence between the dimensionality $n$ and the convergence rate of our algorithm, but we do not know whether this ``curse of dimensionality'' is fundamental to nonlinear welfare optimization in MORL. Finally, though even the tabular setting is challenging for nonlinear welfare optimization in MORL, we believe the intuition of non-stationary action selection coupled with nonlinear learning updates can be extended to the function approximation setting and combined with deep neural network representations.

%in the function approximation setting with deep neural networks. While we have proven that it is an APX-hard to find the optimal solution of $\NSW$ in MOMDP, and empirically demonstrated that our algorithm achieves an approximate of the optimal solution, we would like to theoretically demonstrate our algorithm's provable approximation guarantees. Lastly, although we have empirically observed the effect of the number of dimensions on the algorithm's performance, we would still like to theoretically demonstrate the computational dependence on the dimension of rewards, which is highly related to real-world scenario where the number of users could be large.

%%%%%%%%%%%%%%%%%%%%%%%%%%%%%%%%%%%%%%%%%%%%%%%%%%%%%%%%%%%%%%%%%%%%%%%%

%%% The acknowledgments section is defined using the "acks" environment
%%% (rather than an unnumbered section). The use of this environment 
%%% ensures the proper identification of the section in the article 
%%% metadata as well as the consistent spelling of the heading.

\iffalse
\begin{acks}
If you wish to include any acknowledgments in your paper (e.g., to 
people or funding agencies), please do so using the `\texttt{acks}' 
environment. Note that the text of your acknowledgments will be omitted
if you compile your document with the `\texttt{anonymous}' option.
\end{acks}
\fi

%%%%%%%%%%%%%%%%%%%%%%%%%%%%%%%%%%%%%%%%%%%%%%%%%%%%%%%%%%%%%%%%%%%%%%%%

%%% The next two lines define, first, the bibliography style to be 
%%% applied, and, second, the bibliography file to be used.

\bibliographystyle{ACM-Reference-Format} 
\bibliography{refs.bib}

%%%%%%%%%%%%%%%%%%%%%%%%%%%%%%%%%%%%%%%%%%%%%%%%%%%%%%%%%%%%%%%%%%%%%%%%

\end{document}

% --- supplement: supplement.tex ---

\noindent{\huge \scshape Supplementary Material}
\thispagestyle{empty}
\section{Environment Descriptions}
\subsection{Taxi}
\begin{figure}[h]
    \centering
    \includegraphics[width=0.2\linewidth]{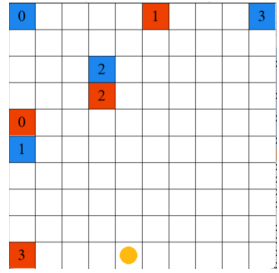}
    \caption{Visualization of Taxi grid world, orange circle is the taxi, origins are blue squares, destinations are red squares, with numbers indicating the corresponding origin and destination pairs}
    \label{fig:my_label}
\end{figure}
\begin{enumerate}
    \item State space: contains information about location of taxi on the grid, whether there is passenger in taxi, destination of the passenger in taxi
    \item Action space: move north, south, east, west, pick passenger, drop passenger
    \item Reward function: 
\end{enumerate}
\begin{equation*}
    \bm{R}_t = \begin{cases}
    \bm{0}  & \text{if $a_t \in \{$north, south, east, west $\}$ or $a_t$ is a valid pick or drop} \\
    \bm{-10} & \text{if invalid pick or drop is performed}\\
    R_i = 30, R_j = 0, \forall j \neq i & \text{if passenger from origin $i$ in taxi is dropped correctly to its destination}\\
    \end{cases}
\end{equation*}

In this environment, the agent is a taxi driver who is trying to deliver multiple passengers from their origins to their destinations. For simplicity, we assume there are infinite number of passengers at each origin, and the task is modeled as a \textit{continuing} task and therefore has no \textit{terminal state}. The state space contains information about location of taxi, whether there is passenger currently in taxi, as well as destination of the passenger in taxi. Our agent has six actions: drive north, south, east, west, and pick and drop passenger. The dimension of the objectives is the number of origin and destination pairs, which can be decided arbitrarily as a parameter in the environment. At each time step, the agent receives a reward of $\mathbf{r}=\mathbf{0}$ for movement, $\mathbf{r}=\mathbf{-10}$ for illegal action (dropping or picking at incorrect locations), and a reward of 30 at the dimension of the origin location for correct delivery, 0 for others. We also restricted the taxi to carry only one passenger at a time. This constraint enforces objectives to be conflicting, where delivery of one passenger implies ignoring the others. Under this particular setting, the agent's fairness performance becomes more important. It should provide its delivery service to each location successfully and fairly over time within each episode, without ignoring certain locations.

\subsection{Resource Gathering (RG)}

\begin{figure}[h]
    \centering
    \includegraphics[width=0.2\linewidth]{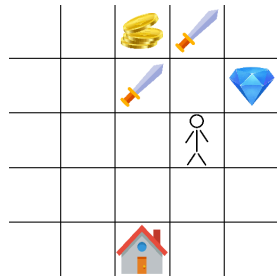}
    \caption{Visualization of RG grid world}
    \label{fig:my_label}
\end{figure}
The RG domain is a 5 $\times$ 5 grid world where the agent collects three types of resources (gold, gem, and sword) spawned randomly at different locations with a 99.99\% stochastic probability and disappearing with a small probability of 10\%. With a probability of 99.99\%, a new map is generated randomly indicating the newly updated locations of the resources. Our agent has four actions: traveling up, down, left and right in the four cardinal directions. The reward encodes the value achieved at each resource type (determined by the quantity of resources and the reward value of the resource type). Under equal rewards, the agent receives a reward $R=10$ in one dimension for gathering one resource in the corresponding dimension. We trained the agent with non-stationary action selection in discounted continuous task, and evaluated the agent with 10000 steps over resources of equal rewards and scaled rewards, and recorded accumulated rewards for each resource type. The goal of our agent is to collect as more resources as possible while maintaining a balance between difference types of resources.

\section{Experimental Results for Other Welfare Functions}
\thispagestyle{empty}
We run experiments for \textit{Welfare Q-Learning} based on $P$-welfare and \textit{egalitarian} welfare functions. We choose a range of values of $P$ between $[-1,1]$, and recorded each of its performance with $\NSW$ and utilitarian score.
\begin{figure}[h]
\centering
\begin{subfigure}{0.5\linewidth}
\includegraphics[width=1\linewidth]{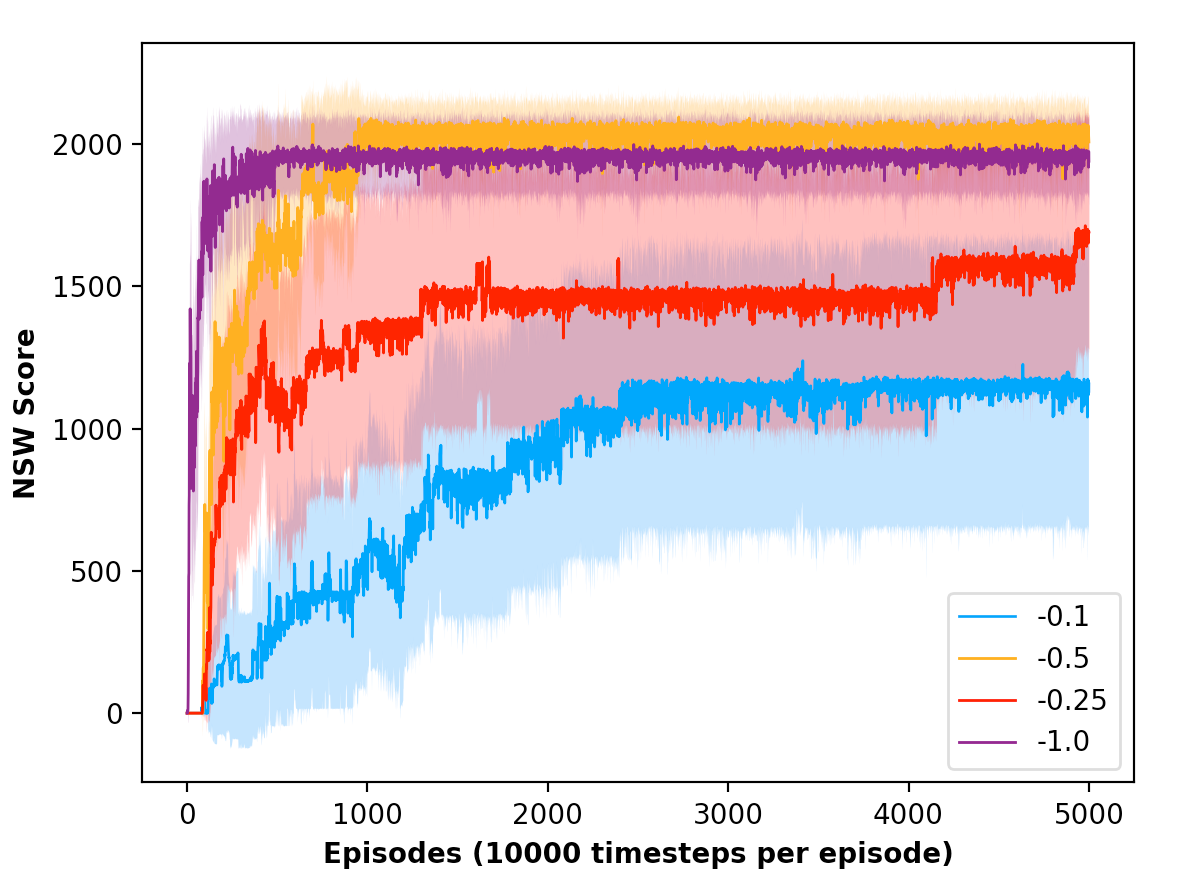}
\end{subfigure}%
\begin{subfigure}{0.5\linewidth}
\includegraphics[width=1\linewidth]{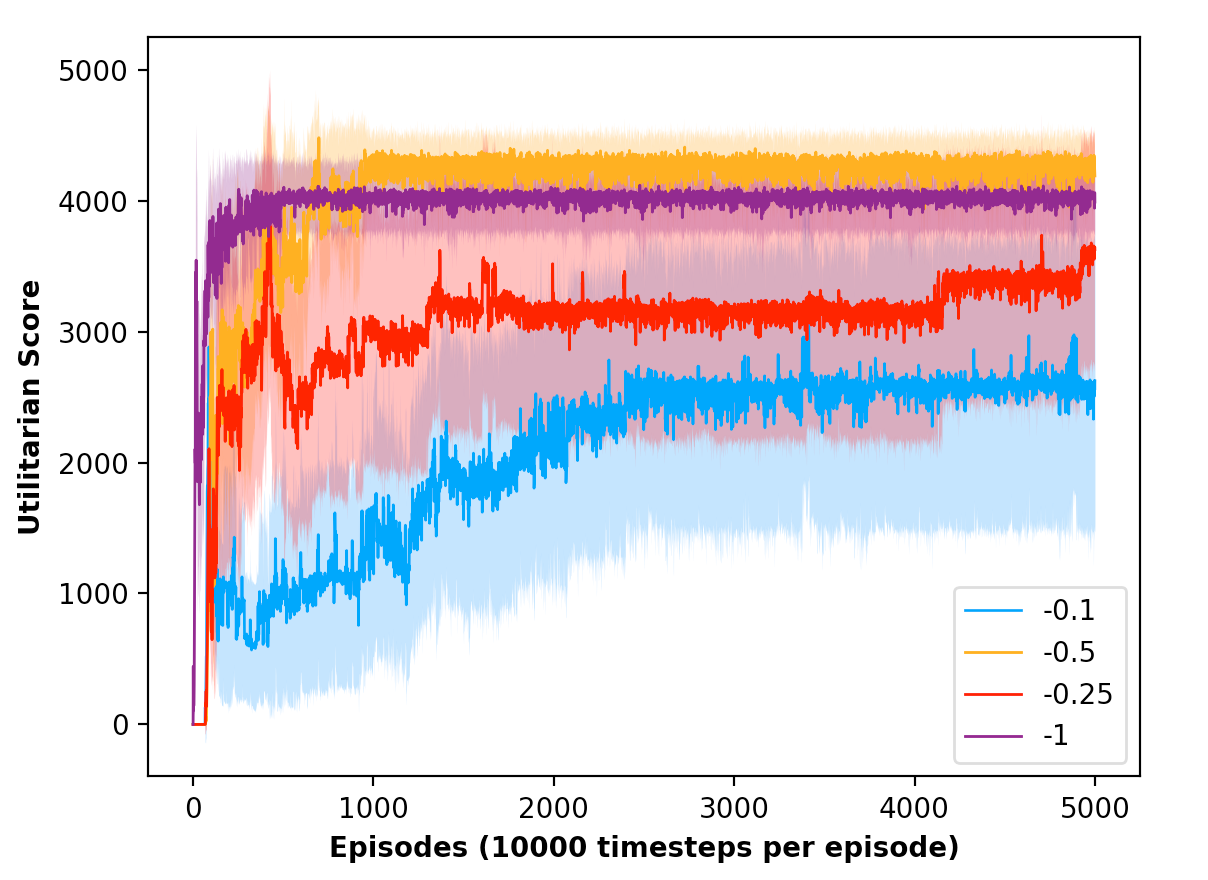}
\end{subfigure}
\begin{subfigure}{0.5\linewidth}
\includegraphics[width=1\linewidth]{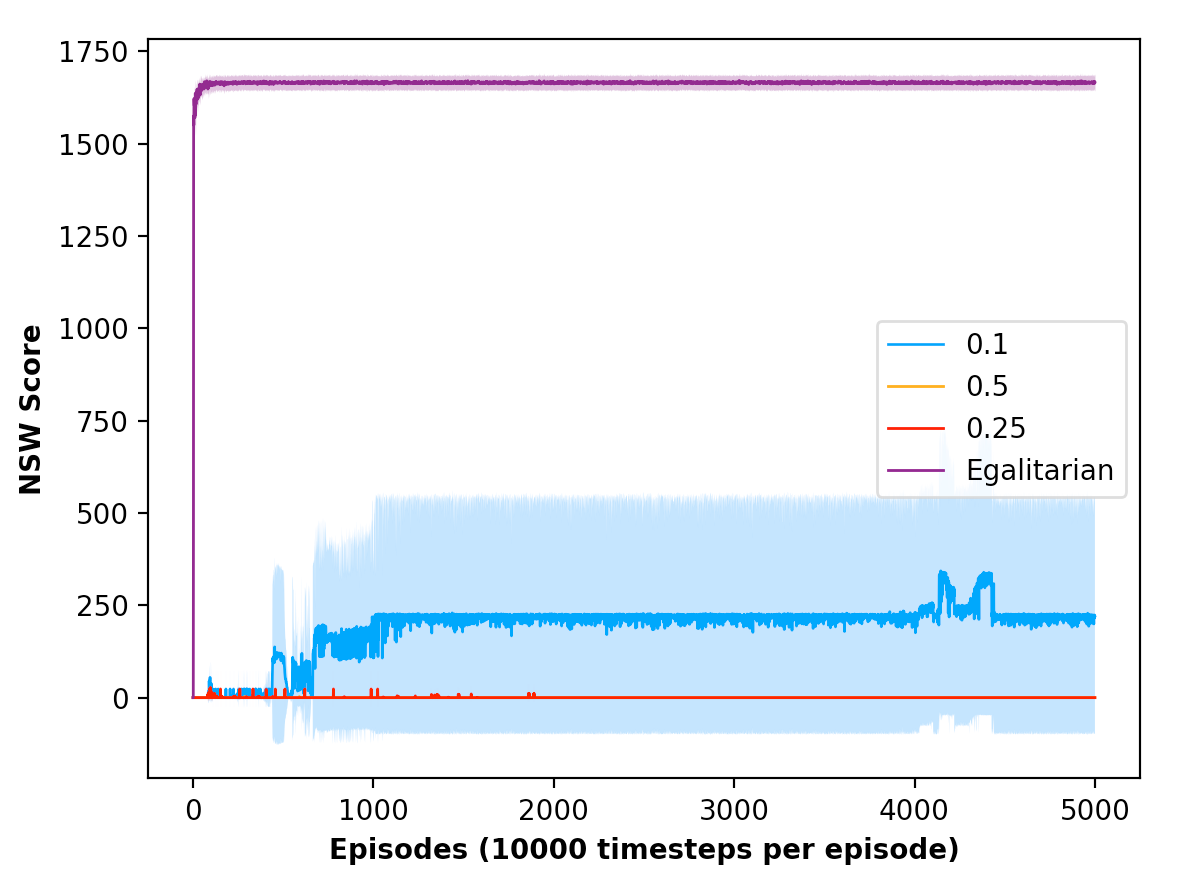}
\end{subfigure}%
\begin{subfigure}{0.5\linewidth}
\includegraphics[width=1\linewidth]{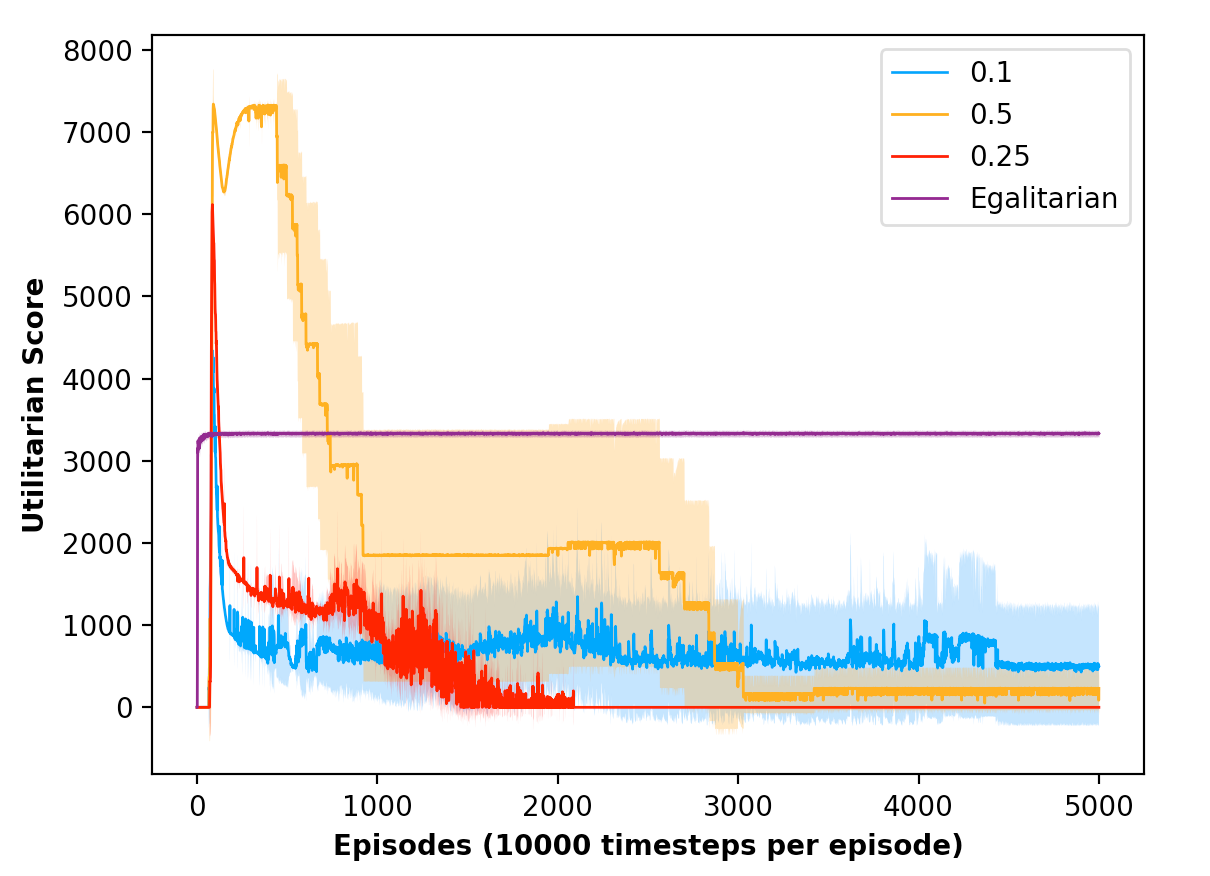}
\end{subfigure}
\caption{Experimental Results for Other Welfare Functions}
\end{figure}

\section{Proof of Convergence for Welfare $Q$-learning}
In this section, we provide convergence proof for our multi-objective algorithm. The proof is based on the well-known Banach’s Fixed-Point Theorem, which guarantees the existence and uniqueness of fixed-point of a contraction map on a complete metric space. Therefore, generalizing this theorem a bit, we can imagine all value functions of reinforcement learning are in some metric space, and finding the optimal value or policy is to find the fixed point of a certain contraction on that space. To do this, we i) define a well-defined metric on the space of $\bm{Q}$-functions; ii) show that the optimal operator is a contraction; and finally iii) apply the Generalized Banach Fixed-Point Theorem.

\begin{theorem}
\label{theorem:convergence}
\textit{For discount factor $\gamma \in [0,1)$, the Q values of Welfare Q-Learning converge.}
\end{theorem}
\begin{proof}\hfill

\textbf{Definition 3.1} Define a value metric $d$ on the space of Q-tables $\mathcal{Q}$ by
$$
d(\bm{Q},\bm{Q}') := \max_{\substack{s \in \mathcal{S}, a \in \mathcal{A}\\ i \in \{1,\dots,d\}} } \left|Q_i(s,a)) - (Q_i'(s,a))\right|
$$ 
\end{proof}
\thispagestyle{empty}
\begin{lemma} The value metric $d$ is a well-defined metric.
    \begin{proof}
    $d(\bm{Q},\bm{Q}') = 0 \iff |Q_i(s,a) - Q_i'(s,a)| = 0 \iff \bm{Q}(s,a) = \bm{Q}'(s,a)$ for all $s, a$. So positivity holds for $d$. If we choose $\{s^*, a^*, i^*\} = \argmax d(\bm{Q},\bm{Q}'')$, then
    $$
    \begin{aligned}
    \hspace{1cm} &d(\bm{Q},\bm{Q}') + d(\bm{Q},\bm{Q}'') \\
    &\ge |Q_{i^*}(s^*,a^*) - Q_{i^*}'(s^*,a^*)| + |Q'_{i^*}(s^*,a^*) - Q''_{i^*}(s^*,a^*)| \\
    &\ge |Q_{i^*}(s^*,a^*) - Q''_{i^*}(s^*,a^*)| = d(\bm{Q},\bm{Q}')
    \end{aligned}
    $$
    So triangle inequality holds for $d$. It is easy to verify from the definiton that $d(\bm{Q},\bm{Q}) = 0$ and $d(\bm{Q},\bm{Q}') = d(\bm{Q}', \bm{Q})$. So $d(\cdot)$ is indeed a well-defined metric.
    \end{proof}
    \end{lemma}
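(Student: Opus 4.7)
The plan is to verify the four standard axioms of a metric for $d$: non-negativity, identity of indiscernibles, symmetry, and the triangle inequality. The key observation is that $d$ is simply a finite maximum of pointwise absolute differences, so it should inherit its metric structure directly from the ordinary absolute value metric on $\mathbb{R}$, and the verification should reduce to routine manipulations in each case.

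First I would dispatch the three easy axioms. Non-negativity is immediate because a maximum of nonnegative quantities is nonnegative. For identity of indiscernibles, if $d(\bm{Q},\bm{Q}')=0$ then every term $|Q_i(s,a)-Q_i'(s,a)|$ inside the maximum must be zero, which forces $\bm{Q}=\bm{Q}'$ component by component; the converse is trivial. Symmetry follows by applying $|x-y|=|y-x|$ entrywise before taking the max.

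The only step requiring a moment's thought is the triangle inequality $d(\bm{Q},\bm{Q}'') \leq d(\bm{Q},\bm{Q}') + d(\bm{Q}',\bm{Q}'')$. I would pick a triple $(s^*, a^*, i^*)$ that attains the outer maximum defining $d(\bm{Q},\bm{Q}'')$, apply the scalar triangle inequality at that index to split $|Q_{i^*}(s^*,a^*) - Q''_{i^*}(s^*,a^*)|$ through $Q'_{i^*}(s^*,a^*)$, and then upper-bound each of the two resulting scalar terms by the corresponding global maxima $d(\bm{Q},\bm{Q}')$ and $d(\bm{Q}',\bm{Q}'')$.

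There is no real obstacle here: the lemma is essentially the statement that the sup norm induces a metric on a finite-dimensional coordinate space, specialized to tuples indexed by $(s,a,i)$. The one subtlety worth flagging is that $\max$ rather than $\sup$ is being used, which is legitimate as long as $\mathcal{S} \times \mathcal{A} \times \{1,\dots,d\}$ is finite (the tabular setting assumed throughout the paper); in a continuous setting one would have to replace the maximum with a supremum and use an $\varepsilon$-approximation rather than an attained maximizer when setting up the triangle inequality step.
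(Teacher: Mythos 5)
Your proof is correct and follows essentially the same route as the paper's: dispatch the easy axioms pointwise, then prove the triangle inequality by taking a maximizing triple $(s^*,a^*,i^*)$ for $d(\bm{Q},\bm{Q}'')$, splitting through $\bm{Q}'$ with the scalar triangle inequality, and bounding each resulting term by the corresponding global maximum. Your version is in fact stated more cleanly than the paper's display, which contains typos (it writes $d(\bm{Q},\bm{Q}'')$ where $d(\bm{Q}',\bm{Q}'')$ is meant, and ends with $=d(\bm{Q},\bm{Q}')$ where $=d(\bm{Q},\bm{Q}'')$ is intended), and your caveat that $\max$ requires finiteness of $\mathcal{S}\times\mathcal{A}\times\{1,\dots,d\}$ is a legitimate point the paper leaves implicit.
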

    
    \noindent
    \textbf{Remark.} It is easy to show that metric space $(\mathcal{Q}, d)$ is complete. \\
    Next, similar to scalarized $\bm{Q}$-learning, we design an optimality filter $\mathcal{H}$ defined by \\
    
    \noindent
    \textbf{Definition 5.2}
    $$
    (\mathcal{H}\bm{\bm{Q}})(s) := \text{arg}_{\bm{Q}} \max_{a' \in \mathcal{A}} W(\bm{Q}(s, a'))
    $$
    where $\text{arg}_{\bm{Q}}$ takes the multi-objective value corresponding to the maximum, i.e., $\bm{\bm{Q}}(S, a'')$ such that
    $a'' \in \text{arg}\max_{a\in \mathcal{A}} W(\bm{\bm{Q}}(S,a'))$, and $W$ is a welfare function of interest. \\
    
    \noindent
    Using the definition of the optimality filter, we can then write the optimality operator $\mathcal{T}$ in terms of the optimal filter: \\
    
    \noindent
    \textbf{Definition 5.3}
    $$
    (\mathcal{T}\bm{Q})(s,a):= \bm{r}(s,a) + \gamma \mathbb{E}_{s' \sim \mathcal{P}(\cdot| s, a)}(\mathcal{H}\bm{Q})(s')
    $$
    
    \noindent
    \textbf{Remark.} Note that in the algorithm, at each iteration, we sample from $\mathcal{P}(\cdot | s,a)$ to make an update. If the learning rate $\alpha$ satisfies the usual Robbins-Monro type conditions, namely $\sum \alpha = \infty$ and $\sum \alpha^2 < \infty$, the update at each iteration is, in expectation, applying the optimality operatior $\mathcal{T}$. Thus, to show convergence, it suffices to show that iteratively applying $\mathcal{T}$ on any $Q$ leads to a unique $Q$-table.
    
    \begin{lemma} \textbf{(Optimal Operator is a Contraction).} Let $\bm{\bm{Q}}, \bm{\bm{Q}'}$ be any two multi-objective $\bm{Q}$-value functions, then $d(\mathcal{T}\bm{\bm{Q}}, \mathcal{T}\bm{\bm{Q}'}) \le \gamma d(\bm{\bm{Q}}, \bm{\bm{Q}'})$, where $\gamma \in [0,1)$ is the discount factor of the underlying MOMDP.
    \begin{proof}
    Without loss of generality, we assume $\max_{\substack{a \in \mathcal{A}} } Q_i(s,a)) \ge \max_{\substack{a \in \mathcal{A}} }Q_i'(s,a)$ for some state $s$ and component $i$ of interest.
    Expand the expression of $d(\mathcal{T} \bm{Q}, \mathcal{T} \bm{Q}')$, we have
    \begin{align*}
        d(\mathcal{T} \bm{Q}, \mathcal{T} \bm{Q}') &= \max_{\substack{s \in \mathcal{S}, a \in \mathcal{A} \\ i \in \{1,\dots,d\}}} \left|(\mathcal{T}\bm{Q})_i(s,a)) - (\mathcal{T}\bm{Q}')_i(s,a))\right| \\
        &= \max_{\substack{s \in \mathcal{S}, a \in \mathcal{A} \\ i \in \{1,\dots,d\}}} \Bigg| \gamma \cdot E_{s' \sim P(\cdot | s,a)}(\mathcal{H}\bm{Q})_i(s') -\gamma \cdot E_{s' \sim P(\cdot | s,a)}(\mathcal{H}\bm{Q}')_i(s')) \Bigg|\\
        &\le \gamma \max_{\substack{s' \in \mathcal{S}\\ i \in \{1,\dots,d\}}} \Bigg|\text{arg}_{\bm{Q}} \left[\max_{a' \in \mathcal{A}} W(\bm{Q}(s',a'))\right]_i - \text{arg}_{\bm{Q}} \left[\max_{a'' \in \mathcal{A}} W(\bm{Q}'(s',a'')) \right]_i\Bigg| \tag{1}\\
        \end{align*}
     where $(1)$ is due to $|E[\cdot]| \le E[| \cdot |] \le \max | \cdot |$. According to our assumption, let $a'$ be the action chosen to maximize the value of $Q_i(s,a')$ for some state $s$ and component $i$ of interest, then we have     
        \begin{align*}
        & d(\mathcal{T} \bm{Q}, \mathcal{T} \bm{Q}') \\
        &\le \gamma \max_{\substack{s' \in \mathcal{S}\\ i \in \{1,\dots,d\}}} \Bigg|\mathrm{arg}_{\bm{Q}} \left[\max_{a' \in \mathcal{A}} W(\bm{Q}(s',a'))\right]_i -\mathrm{arg}_{\bm{Q}} \left[\max_{a'' \in \mathcal{A}} W(\bm{Q}'(s',a'')) \right]_i\Bigg| \\
        &\le \gamma \max_{\substack{s' \in \mathcal{S}\\ i \in \{1,\dots,d\}}} \Bigg|\mathrm{arg}_{\bm{Q}} \left[ W(\bm{Q}(s',a'))\right]_i -\mathrm{arg}_{\bm{Q}} \left[ W(\bm{Q}'(s',a'))\right]_i  + \mathrm{arg}_{\bm{Q}} \left[ W(\bm{Q}'(s',a'))\right]_i - \mathrm{arg}_{\bm{Q}} \left[\max_{a'' \in \mathcal{A}} W(\bm{Q}'(s',a'')) \right]_i\Bigg| \\
        &\le \gamma \max_{\substack{s' \in \mathcal{S}\\ i \in \{1,\dots,d\}}} \Bigg| \mathrm{arg}_{\bm{Q}} \left[ W(\bm{Q}(s',a'))\right]_i - \mathrm{arg}_{\bm{Q}} \left[ W(\bm{Q}'(s',a'))\right]_i \Bigg| \tag{2} \\
        &\le \gamma \max_{\substack{s' \in \mathcal{S}, a' \in \mathcal{A}\\ i \in \{1,\dots,d\}} } \left|Q_i(s',a')) - (Q_i'(s',a'))\right| \tag{3} = \gamma d(\bm{Q},\bm{Q}')
        \end{align*}
     $(2)$ arises from the w.l.o.g. assumption that \\ $Q_i(s', a') - \max_{a''} Q_i(s', a'') \ge 0$. Thus, the whole expression in $|\cdot|$ is nonnegative and $Q_i(s', a') - Q_i(s', a') \ge 0$. We can discard the last two terms since $Q'_i(s', a') \le \max_{a''} Q_i(s', a'')$. $(3)$ is due to $\max_{s', i} f(s', a')\le \max_{s', a''} f(s', a'')$ holds for any $a'$ and $f(\cdot)$. This completes our proof that $\mathcal{T}$ is a contraction.
    \end{proof}
    \end{lemma}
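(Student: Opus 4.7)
The plan is to adapt the classical scalar $Q$-learning contraction argument to the vector-valued setting defined by $\mathcal{H}$ and $\mathcal{T}$. First I would substitute Definition 5.3 into $d(\mathcal{T}\bm{Q}, \mathcal{T}\bm{Q}')$: since both operators add the same reward vector $\bm{r}(s,a)$, it cancels in the difference, leaving a factor $\gamma$ multiplying the difference of expectations $\mathbb{E}_{s' \sim \mathcal{P}(\cdot \mid s,a)}[(\mathcal{H}\bm{Q})_i(s') - (\mathcal{H}\bm{Q}')_i(s')]$. Next I would apply the standard bound $|\mathbb{E}[X] - \mathbb{E}[Y]| = |\mathbb{E}[X-Y]| \le \mathbb{E}[|X-Y|] \le \max_{s'} |X(s') - Y(s')|$ to pass from the expectation to a pointwise maximum over next-states, so that the outer $\max$ in $d$ is absorbed into a $\max$ over $s'$ and $i$, reducing the goal to a uniform bound $|(\mathcal{H}\bm{Q})_i(s') - (\mathcal{H}\bm{Q}')_i(s')| \le d(\bm{Q}, \bm{Q}')$ for every $s'$ and $i$.

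The hard part is this pointwise bound on the filter. The filter $\mathcal{H}$ picks an action by maximizing the scalar welfare $W$ applied to a vector, and the maximizing action need not be the same for $\bm{Q}$ and $\bm{Q}'$. Writing $a^{*} \in \argmax_{a} W(\bm{Q}(s', a))$ and $a^{**} \in \argmax_{a} W(\bm{Q}'(s', a))$, the target becomes $|Q_i(s', a^{*}) - Q'_i(s', a^{**})|$ with $a^{*} \neq a^{**}$ in general. The device I would borrow from scalar $Q$-learning is a WLOG sign assumption, say $(\mathcal{H}\bm{Q})_i(s') \ge (\mathcal{H}\bm{Q}')_i(s')$, combined with an add-and-subtract of $Q'_i(s', a^{*})$; the optimality of $a^{**}$ under $W(\bm{Q}'(s', \cdot))$ should then let me discard the ``swap'' term $Q'_i(s', a^{*}) - Q'_i(s', a^{**})$, leaving the clean residual $|Q_i(s', a^{*}) - Q'_i(s', a^{*})|$, which is dominated by $d(\bm{Q}, \bm{Q}')$ by the very definition of the metric. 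Chaining these inequalities then yields $d(\mathcal{T}\bm{Q}, \mathcal{T}\bm{Q}') \le \gamma\, d(\bm{Q}, \bm{Q}')$, which is the claim.

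I expect the subtlest step to be justifying that the add-and-subtract trick truly works here: in scalar $Q$-learning the argmax action and the componentwise maximizer coincide, but here $W$ ranks whole vectors while $d$ looks at a single component $i$ at a time, so the swap term has no \emph{a priori} sign coming from welfare optimality alone. Making this step rigorous likely requires coordinating the WLOG choice of the component $i$ and of which side dominates so that the swap term is nonpositive, or else invoking a mild monotonicity or Lipschitz-type assumption on $W$; once the correct sign is secured the remaining calculation is routine, and combining with completeness of $(\mathcal{Q}, d)$ and the Banach fixed-point theorem delivers Theorem \ref{theorem:convergence}.
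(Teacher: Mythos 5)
Your outline follows the same route as the paper's own argument: cancel the reward terms, pull out $\gamma$, bound the expectation over $s'$ by a maximum, and then handle the mismatch between the two welfare-maximizing actions by adding and subtracting $Q'_i(s',a^{*})$ and discarding the swap term via optimality. The difficulty you flag in your final paragraph is therefore not a deferred technicality --- it is the crux, and neither your sketch nor the proof you are reconstructing actually closes it. The swap term to be discarded is $Q'_i(s',a^{*}) - Q'_i(s',a^{**})$ with $a^{**} \in \argmax_{a} W(\bm{Q}'(s',a))$. Optimality of $a^{**}$ is a statement about the scalar $W(\bm{Q}'(s',\cdot))$; for a nonlinear welfare function (Nash social welfare, egalitarian, $P$-welfare with $P\neq 1$) it implies nothing about the sign of the difference in any single coordinate $i$, since the $W$-optimal action may be strictly worse than $a^{*}$ in coordinate $i$. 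No coordination of the WLOG choices can repair this, because $i$ ranges over all coordinates in the outer maximum and the coordinate attaining the metric need not be one where the swap term is favorably signed. (The paper's own justification at its step $(2)$, which compares $Q'_i$ against $\max_{a''}Q_i$, suffers from exactly this problem.)

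Worse, the intermediate claim your plan reduces to --- that $|(\mathcal{H}\bm{Q})_i(s') - (\mathcal{H}\bm{Q}')_i(s')| \le d(\bm{Q},\bm{Q}')$ for all $s',i$ --- is false for general $W$. Take $W=\min_i$, two actions and two objectives, with $\bm{Q}(s',a_1)=\bm{Q}'(s',a_1)=(2,3)$, $\bm{Q}(s',a_2)=(2.1,2.1)$, and $\bm{Q}'(s',a_2)=(1.9,2.1)$. Then $d(\bm{Q},\bm{Q}')=0.2$, yet $\mathcal{H}\bm{Q}(s')=(2.1,2.1)$ while $\mathcal{H}\bm{Q}'(s')=(2,3)$, so the filter difference in the second coordinate is $0.9$. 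Embedding this in an MDP with a deterministic transition into $s'$ gives $d(\mathcal{T}\bm{Q},\mathcal{T}\bm{Q}')\ge 0.9\gamma > \gamma\, d(\bm{Q},\bm{Q}')$, so $\mathcal{T}$ is not a $\gamma$-contraction (and for $\gamma>2/9$ not even a non-expansion). To salvage the lemma you must restrict $W$ --- for instance to linear scalarizations, where the $\arg_{\bm{Q}}\max$ behaves as in scalar $Q$-learning --- or change the metric to one adapted to $W$; the ``mild monotonicity or Lipschitz-type assumption'' you hope for in your last sentence does not exist for the welfare functions this paper targets, so your proof, like the one it mirrors, stops one genuine step short.
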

    \thispagestyle{empty}
    \noindent
    Finally, since in our design, the distance $d$ is a well-defined metric, to prove convergence to a unique fixed point, we will use the Generalized Banach Fixed Point Theorem. \\
    \begin{lemma}\textbf{(Generalized Banach Fixed-Point Theorem)}
    
    Given that $\mathcal{T}$ is a contraction mapping with Lipschitz coefficient $\gamma$ on the complete pseudo-metric space $\langle \mathcal{Q}, d \rangle$, then there exists $\bm{Q}^*$ such that
    $$
    \lim_{n\to \infty} d(\mathcal{T}^n\bm{\bm{Q}}, \bm{\bm{Q}^*}) = 0
    $$
    for any $\bm{\bm{Q}} \in \mathcal{Q}$.
    \end{lemma}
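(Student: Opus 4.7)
The plan is to follow the classical Picard iteration proof of Banach's Fixed-Point Theorem, adapted to the complete (pseudo-)metric space $(\mathcal{Q}, d)$ established above. Three movements: (i) show the orbit $\bm{Q}_n := \mathcal{T}^n\bm{Q}$ starting from an arbitrary $\bm{Q} \in \mathcal{Q}$ is a Cauchy sequence, (ii) invoke completeness (from the Remark after the first Lemma) to obtain a limit $\bm{Q}^*$, and (iii) identify $\bm{Q}^*$ as a fixed point and conclude $d(\mathcal{T}^n\bm{Q}, \bm{Q}^*) \to 0$. Uniqueness of $\bm{Q}^*$ (up to pseudo-distance zero) will fall out of the contraction as a bonus.

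For step (i), I would iterate the contraction lemma to get $d(\bm{Q}_{n+1}, \bm{Q}_n) \le \gamma^n\, d(\bm{Q}_1, \bm{Q}_0)$, then use the triangle inequality verified earlier to control the geometric tail: for $m > n$,
\[
d(\bm{Q}_m, \bm{Q}_n) \;\le\; \sum_{k=n}^{m-1} \gamma^k\, d(\bm{Q}_1, \bm{Q}_0) \;\le\; \frac{\gamma^n}{1-\gamma}\, d(\bm{Q}_1, \bm{Q}_0),
\]
which vanishes as $n \to \infty$ since $\gamma \in [0,1)$. Step (ii) is then immediate from the completeness Remark: the Cauchy sequence $\{\bm{Q}_n\}$ has a limit $\bm{Q}^* \in \mathcal{Q}$, so by definition $d(\mathcal{T}^n\bm{Q}, \bm{Q}^*) \to 0$, which is exactly the stated conclusion.

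For step (iii), to see that $\bm{Q}^*$ is a genuine fixed point (and hence does not depend on the starting $\bm{Q}$), I would combine contraction with the triangle inequality:
\[
d(\mathcal{T}\bm{Q}^*, \bm{Q}^*) \;\le\; d(\mathcal{T}\bm{Q}^*, \mathcal{T}\bm{Q}_n) + d(\bm{Q}_{n+1}, \bm{Q}^*) \;\le\; \gamma\, d(\bm{Q}^*, \bm{Q}_n) + d(\bm{Q}_{n+1}, \bm{Q}^*) \;\to\; 0,
\]
so $d(\mathcal{T}\bm{Q}^*, \bm{Q}^*) = 0$. For uniqueness, if $\tilde{\bm{Q}}$ is another fixed point then $d(\bm{Q}^*, \tilde{\bm{Q}}) = d(\mathcal{T}\bm{Q}^*, \mathcal{T}\tilde{\bm{Q}}) \le \gamma\, d(\bm{Q}^*, \tilde{\bm{Q}})$, which forces $d(\bm{Q}^*, \tilde{\bm{Q}}) = 0$ because $\gamma < 1$.

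The main obstacle is not the manipulation of distances --- the geometric tail bound is textbook --- but rather justifying completeness of $(\mathcal{Q}, d)$, which the paper dispatches in a one-line remark. I would want to spell out that because $\mathcal{S}$, $\mathcal{A}$, and the objective dimension $d$ are all finite and rewards are bounded, every $\bm{Q} \in \mathcal{Q}$ is bounded coordinatewise, so a $d$-Cauchy sequence is uniformly Cauchy in each coordinate; pointwise completeness of $\mathbb{R}$ then delivers a limit $\bm{Q}^*$ whose components remain bounded and hence lie in $\mathcal{Q}$. Once completeness is in hand, everything else is a mechanical application of the contraction inequality already proved.
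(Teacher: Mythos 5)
Your proposal is correct: the Picard-iteration argument (geometric tail bound via the contraction, Cauchyness, completeness, then identification of the limit as a fixed point) is the canonical proof of this lemma, and you adapt it properly to the pseudo-metric setting by asserting uniqueness of the fixed point only up to $d$-distance zero. Note that the paper itself offers no proof of this lemma at all --- it is invoked as a black-box citation of the Generalized Banach Fixed-Point Theorem --- so your write-up, including the justification of completeness of $\langle \mathcal{Q}, d\rangle$ that the paper relegates to a one-line remark (immediate here since $\mathcal{S}$, $\mathcal{A}$, and the objective dimension are finite, making $\mathcal{Q}$ a finite-dimensional space under the sup metric), supplies exactly what the paper leaves implicit.
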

    
    \noindent
    Since our metric $d$ is a well-defined metric by Lemma 2 and therefore $\langle \mathcal{\bm{Q}}, d \rangle$ is a complete metric space, which is also a complete pseudo-metric space. Also, by Lemma 3, $\mathcal{T}$ is a contraction. So it follows from Lemma 4 that there exists $\bm{Q}^*$ such that
    $$
    \lim_{n\to \infty} d(\mathcal{T}^n\bm{\bm{Q}}, \bm{\bm{Q}^*}) = 0
    $$
    for any $\bm{\bm{Q}} \in \mathcal{Q}$. In other words, iteratively applying optimal operator $\mathcal{T}$ on any multi-objective Q-table, the algorithm will terminate with a unique table. Since in Welfare $Q$-learning, the update at each iteration is in expectation applying $\mathcal{T}$, the algorithm is convergent. This concludes the proof of Theorem 3.1.

    $$
    d(\bm{Q},\bm{Q}') := \max_{\substack{s \in \mathcal{S}, a \in \mathcal{A}\\ i \in \{1,\dots,d\}} } \left|Q_i(s,a)) - (Q_i'(s,a))\right|
    $$